\documentclass[sigconf]{acmart}
\usepackage[short]{optional} 
\usepackage{caption}
\usepackage[labelformat=simple]{subcaption}

\usepackage{graphicx,xcolor} 
\usepackage{bbm}
\usepackage{mathtools}
\usepackage{amsmath,amsthm}
\usepackage{mathrsfs}
\usepackage{url}
\usepackage{svg}
\usepackage{braket}
\usepackage{hyperref}
\usepackage[most]{tcolorbox}
\usepackage[ruled,vlined]{algorithm2e}
\usepackage[framemethod=tikz]{mdframed}
\usepackage{graphicx,pifont}
\usepackage{xcolor,colortbl}
\usepackage{multirow}
\usepackage{pifont}
\usepackage{xcolor}
\usepackage{makecell}
\usepackage{colortbl}
\usepackage{booktabs}
\usepackage{arydshln}
\usepackage{rotating}
\usepackage{balance}

\theoremstyle{plain}
\newtheorem{thm}{Theorem}[section]

\newcommand\norm[1]{\left\lVert#1\right\rVert}
\usepackage{framed}
\setlength{\OuterFrameSep}{0pt}
\setlength{\FrameSep}{3pt}
\definecolor{shadecolor}{rgb}{0.88,0.93,0.93}
\definecolor{darkred}{rgb}{0.5,0.3,0.7}

\newif\ifshowcomments
\showcommentstrue 

\ifshowcomments
\newcommand{\mynote}[2]{\fbox{\bfseries\sffamily\scriptsize{#1}}
	{\small$\blacktriangleright$\textsf{\emph{#2}}$\blacktriangleleft$}}
\else
\newcommand{\mynote}[2]{}
\fi

\definecolor{lgray}{gray}{0.95}
\newcommand{\commentout}[1]{}
\definecolor{cadmiumgreen}{rgb}{0.0, 0.42, 0.24}

\newcommand{\seeding}{Disparity Seeding\xspace }

\let\oldding\ding
\renewcommand{\ding}[2][1]{\scalebox{#1}{\oldding{#2}}}

\newcommand{\xmark}{\ding[1]{55}}

\copyrightyear{2021}
\acmYear{2021}
\setcopyright{acmlicensed}
\acmConference[CIKM '21] {Proceedings of the 30th ACM International Conference on Information and Knowledge Management}{November 1--5, 2021}{Virtual Event, Australia.}
\acmBooktitle{Proceedings of the 30th ACM Int'l Conf. on Information and Knowledge Management (CIKM '21), November 1--5, 2021, Virtual Event, Australia}
\acmPrice{15.00}
\acmISBN{978-1-4503-8446-9/21/11}
\acmDOI{10.1145/3459637.3482375}

\settopmatter{printacmref=true}
\begin{document}
\fancyhead{} 
\setlength{\textfloatsep}{5pt}
\setlength{\floatsep}{0pt}

\title{On Influencing the Influential: \seeding}

\author{Ya-Wen Teng$^{1,2}$, Hsi-Wen Chen$^{1}$, De-Nian Yang$^{1,2}$}
\author{Yvonne-Anne Pignolet$^{3}$, Ting-Wei Li$^{1}$,  Lydia Chen$^{4}$} 
\affiliation{$^1$\textit{Institute of Information Science, Academia Sinica, Taiwan}\country{}}
\affiliation{$^2$\textit{Research Center for Information Technology Innovation, Academia Sinica, Taiwan}\country{}}
\affiliation{$^3$\textit{DFINITY, Switzerland}\country{}}
\affiliation{$^4$\textit{Department of Computer Science, Delft University of Technology, The Netherlands}\country{}}
\affiliation{ywteng@iis.sinica.edu.tw; hwchen@arbor.ee.ntu.edu.tw; dnyang@iis.sinica.edu.tw\country{}} \affiliation{yvonneanne@dfinity.org; b05901081@ntu.edu.tw; lydiaychen@ieee.org\country{}}
\begin{abstract}
Online social networks have become a crucial medium to disseminate the latest political, commercial, and social information. Users with high visibility are often selected as seeds to spread information and affect their adoption in target groups. We study how gender differences and similarities can impact the information spreading process. Using a large-scale Instagram dataset and a small-scale Facebook dataset, we first conduct a multi-faceted analysis taking the interaction type, directionality and frequency into account. To this end, we explore a variety of existing and new single and multihop centrality measures. Our analysis unveils that males and females interact differently depending on the interaction types, e.g., likes or comments, and they feature different support and promotion patterns. We complement prior work showing that females do not reach top visibility (often referred to as the glass ceiling effect) jointly factoring in the connectivity and interaction intensity, both of which were previously mainly discussed independently.

Inspired by these observations, we propose a novel seeding framework, called \emph{\seeding}, which aims at maximizing spread while reaching a target user group, e.g., a certain percentage of females -- promoting the influence of under-represented groups. \seeding ranks influential users with two gender-aware measures, the Target HI-index and the Embedding index. Extensive simulations comparing \seeding with target-agnostic algorithms show that \seeding meets the target percentage while effectively maximizing the spread. \seeding can be generalized to counter different types of inequality, e.g., race, and proactively promote minorities in the society.  
\end{abstract}

\begin{CCSXML}
<ccs2012>
   <concept>
       <concept_id>10003752.10010070.10010099.10003292</concept_id>
       <concept_desc>Theory of computation~Social networks</concept_desc>
       <concept_significance>500</concept_significance>
       </concept>
   <concept>
       <concept_id>10003456.10010927.10003613</concept_id>
       <concept_desc>Social and professional topics~Gender</concept_desc>
       <concept_significance>500</concept_significance>
       </concept>
 </ccs2012>
\end{CCSXML}

\ccsdesc[500]{Theory of computation~Social networks}
\ccsdesc[500]{Social and professional topics~Gender}
\keywords{Glass ceiling, disparity ratio, influence maximization}

\maketitle

\section{Introduction}\label{sec:introduction}
Online social networks are an integral part of people's lives in today's society. People share their experiences and opinions via text, audio and videos on social media, e.g., Facebook~\cite{Bakshy:WWW:2012:RoleSoc}, Twitter~\cite{Shirin:ICWSM:2016:TwitterGlassCeiling}, LinkedIn~\cite{Tang:HCI:2017:JobMarketGenderBias}, and Instagram~\cite{Stoica:www:2018:AlgoGlassCeiling}.
Social media grew from a platform to share personal experience into one of the main channels to disseminate information, e.g., news~\cite{Robert:JQ:1991:SocialNews}, political campaigns~\cite{Allcott:JEP:2017:SocialPolitics}, and product reviews~\cite{Dang:IS:2010:SocialProduct}. Influential users who are well connected in their social networks can critically accelerate the information spread and become the ideal seeds to affect others~\cite{Stoica:WWW:2020:SeedingDiver, Stoica:www:2018:AlgoGlassCeiling}.

Thus, it is of great interest to understand the characteristics of influential users, e.g., demographic traits and educational background. Several studies~\cite{Shirin:ICWSM:2016:TwitterGlassCeiling, aral2012identifying} point out that there is a correlation between the (perceived) gender of users and gaining visibility and influence on social media. Specifically, they have shown the existence of the~\emph{glass ceiling effect}, which makes it harder for females to become highly influential~\cite{Medel:SIGCSE:2017:EliminateGenderBias, Imtiaz:ICSE:2019:InvestigateGitHubGenderBias}, observed from their direct interactions with others, e.g., commenting or liking the posts. In other words, there are more males than females in the percentiles of the most popular users~\cite{Leavy:GE:2018:AI/MLGenderBias, Tang:HCI:2017:JobMarketGenderBias}.  These studies shed light  on social media usage patterns and show that gender disparity persists even for the younger generation of users~\cite{aral2012identifying}, by exhibiting the necessary conditions leading to the glass ceiling effect~\cite{Avin:ITCS:2015:HomophilyGlassCeiling, Shirin:ICWSM:2016:TwitterGlassCeiling, Cotter:SF:2001:GlassCeilingEffect}.

An important metric to quantify influence in prior art~\cite{aral2012identifying,Avin:PLOS:2018:ElitesSocialanetwork} is a node's degree, i.e., the number of neighbors a node is connected to under different types of interactions. Although such a measure demonstrates well that fewer females reach the highest tier of visibility compared to males~\cite{Avin:ITCS:2015:HomophilyGlassCeiling}, the intensity of the ties~\cite{Bakshy:WWW:2012:RoleSoc} among the users, e.g., the frequency of interaction among connected users, is often not considered. For instance, Stoica et al.~\cite{Stoica:www:2018:AlgoGlassCeiling} and Nilizadeh et al.~\cite{Shirin:ICWSM:2016:TwitterGlassCeiling} empirically show the existence of the glass ceiling on Instagram and Twitter, respectively, by focusing on the network structure or the interaction intensity separately. This leads us to the question that females still lag behind males in the highest percentiles when considering their degrees and interaction intensity.

Another essential question revolving around the glass ceiling concerns the endorsement process: who supports and promotes influential users. Gaining a deeper understanding of these patterns and structures may unveil solutions to break the glass ceiling. Parity and diversity seeding~\cite{Stoica:WWW:2020:SeedingDiver} are solutions proposed to maximize the information spread to a target gender group, e.g., females, by selecting seeds in a biased manner. Thus, females (forming an under-represented group among the most influential users) are promoted as seeds to disseminate information with the goal of increasing the reception of information among females. However, such approaches cannot explicitly accommodate promoting a specific ratio of the minority group. For instance, a scholarship program may want to guarantee a certain percentage of female award candidates and, thus, adequately inform the target audience. A seeding solution is urgently needed to achieve the target disparity despite different social network structures and interaction pattern manifestations.

We start by revisiting the glass ceiling effect on two popular platforms, i.e., \emph{Instagram} and \emph{Facebook}. In contrast to the prior art, we analyze each gender's visibility by both the degree and interaction intensity, quantifying the number of links and how often they are used, both from a sender and receiver perspective. We further differentiate between different interaction types, i.e., likes vs. comments (and tags for Facebook), to investigate if females face glass ceiling effects that hinder them from reaching higher visibility in casual social platforms. Our analysis initially focuses on the one-hop neighborhood in the social network, i.e., user pairs interacting directly with each other (single-hop analysis). Then, we propose a new centrality metric, termed \emph{HI-index}, to quantify the \emph{indirect network influence} by considering the multihop impact via friends of friends. Our detailed analysis demonstrates that there are number of criteria under which glass ceiling effects are observed, e.g., comment intensity and tag intensity in the single-hop analysis, and it points out unfavorable metrics with low visibility ranks in the overall network for females, e.g, the out-degree of comments. 

In the second part of this paper, we tackle the challenge of designing a seeding algorithm that can achieve a target \emph{disparity ratio} in groups under-represented in the highest visibility percentiles while ensuring maximal information spread. We derive a novel centrality measure, called \emph{Embedding index} via state of art graph neural networks. Leveraging our characterization study, we develop \seeding, which combines the proposed centrality measures, Target HI-index and Embedding index, and diffusion simulation in a novel way. The critical components of \seeding are the gender-aware ranking and seed selection methods. Our evaluation results show that \seeding can effectively achieve the target gender ratio and maximize the information spread compared to the baselines.
Our contributions are summarized as follows:
\vspace{-\topsep}
\begin{itemize}
    \item We conduct a novel gender gap analysis on social media that jointly factors in the connectivity and intensity from the dual perspectives of the sender and receiver sides. 
    \item We show where glass ceiling effects exist for different interaction types, directionality, degree, and link intensity. 
    \item We propose a novel centrality measure, HI-index, and apply it together with PageRank, to highlight and rank the influence of male/female users on the entire network.
    \item We develop a novel seeding framework, \seeding, that maximizes information spread and satisfies a target gender ratio that can overcome the disparity exhibited in the population and compare it to known approaches.
\end{itemize}

\vspace{-2mm}
\section{Related Work}\label{sec:related}

Prior art has extensively studied gender bias and its implications in the context of professional network and career paths~\cite{McPherson:ANNUREY.SOC:2001:BirdHomophily,Avin:ITCS:2015:HomophilyGlassCeiling}. In social network analyses, users' influence is commonly quantified by various centrality measures~\cite{Avin:PLOS:2018:ElitesSocialanetwork,Iyengar:MS:2012:ProductDiffusion,aral2012identifying}, often with an implicit assumption that the links are undirected and link quality is uniform. Hazard models~\cite{aral2012identifying,Iyengar:MS:2012:ProductDiffusion} are applied to estimate the information spread and social contagion. 
On the other hand, several studies stress the importance of the connection quality. 
Bakshy et. al~\cite{Bakshy:WWW:2012:RoleSoc} pointed out with a study on Facebook that users with high interaction intensity form stronger ties and are more influential than users with low intensity, and thus they are key in efficiently disseminating information. The increasing popularity of services to purchase likes and bot activities further raise an alarming concern on the connection quality and implications on social network analysis~\cite{Jonas::2017:masterThesis}.

\textbf{Gender Gap Analysis on Social Media.}
Several studies investigate the glass ceiling effect on different online social platforms. Typically they find that males achieve higher visibility and spread information faster~\cite{Shirin:ICWSM:2016:TwitterGlassCeiling,Stoica:www:2018:AlgoGlassCeiling,Kevin:SN:2008:facebooksurvey}, from the perspective of either the network connectivity or intensity.
\textit{\textbf{Twitter}}: Nilizadeh et al. ~\cite{Shirin:ICWSM:2016:TwitterGlassCeiling} show that the perceived gender affects user visibility in different metrics, e.g., the number of followers, retweets, and lists. From the \emph{complementary cumulative distribution function} (CCDF) of most of those measures, high percentile males achieve higher visibility than females. 
\textit{\textbf{Facebook}}: In~\cite{Kevin:SN:2008:facebooksurvey}, it leverages similarity calculations to quantify relationships with quadratic assignment procedure (QAP). More precisely, similarities under gender, race/ethnicity, and socioeconomic status (SES) are studied.
\textit{\textbf{Instagram}}: Stoica et al.~\cite{Stoica:www:2018:AlgoGlassCeiling} derived mathematical models to explain how recommendation algorithms reinforce the glass ceiling gap. Their study does not take the tie strength into account. 
By contrast, we analyze influential users by fusing both tie strength and degree via a novel centrality measures, HI-index, and observe different glass ceiling effects depending on the interaction types on Instagram and Facebook.

\textbf{Influence Maximization.}
Social influence on online platforms is an active research topic and one of the main focuses is on maximizing the influence ~\cite{Kempe2003}. The key challenge is to pick $k$ seeds which maximize the number of users receiving information spread by the selected seeds, which is proved as NP-hard under the Linear Threshold and Independent Cascade diffusion models~\cite{Kempe2003}. Recent studies \cite{Stoica:WWW:2020:SeedingDiver,gershtein2018balanced,fish2019gaps,ijcai2019-831,farnad2020unifying,ijcai2020-594} focus on reducing the gender gap while maximizing information spread. However, they cannot ensure a user-specified gender ratio in the influenced users. In this paper, \seeding takes a step further to maximize the overall information spread and achieve any required gender ratio by design.

\section{Glass Ceiling on Instagram via Visibility and Endorsement Analysis}\label{sec:instagram}

We conduct a static characterization of female and male interaction patterns, answering the questions if there exists a gender gap in terms of users' visibility and endorsement on Instagram when considering intensity and degree. The visibility analysis studies appreciation received by posts from different genders to investigate the evidence of the glass ceiling, i.e., if males are over-represented in higher percentiles. The endorsement analysis focuses on how users support posts from others. 
In contrast to \cite{Avin:ITCS:2015:HomophilyGlassCeiling, Stoica:www:2018:AlgoGlassCeiling, Shirin:ICWSM:2016:TwitterGlassCeiling, aral2012identifying}, we take both the number of interaction partners (\emph{degree}) and the number of interactions (\emph{intensity}) into account. Moreover, we consider two types of interactions separately for a more nuanced analysis, i.e., we distinguish likes from comments since the latter demands higher involvement and effort from the senders than the former. Our objective here is to revisit the glass ceiling from multi-faceted perspectives and search for insights to understand and design potential alleviation measures to reduce gender gaps on social media. 

We first introduce the Instagram dataset and then analyze the visibility and endorsement from users interacting directly with each other, i.e., single-hop analysis. Afterward, we apply centrality metrics that can factor in both interaction intensity and degree of the entire network, i.e., multi-hop analysis. To this end, we extend H-index~\cite{Alonso:JI:2009:Hindex} to propose \textit{HI-index} by considering interaction intensity to quantify the overall network visibility of male/female. Meanwhile, we also investigate the ranking produced by PageRank.

\begin{figure*}[t]
\centering
\begin{subfigure}{.20\linewidth}
     \centering
     \includegraphics[width=3.5cm,height=2.3cm]{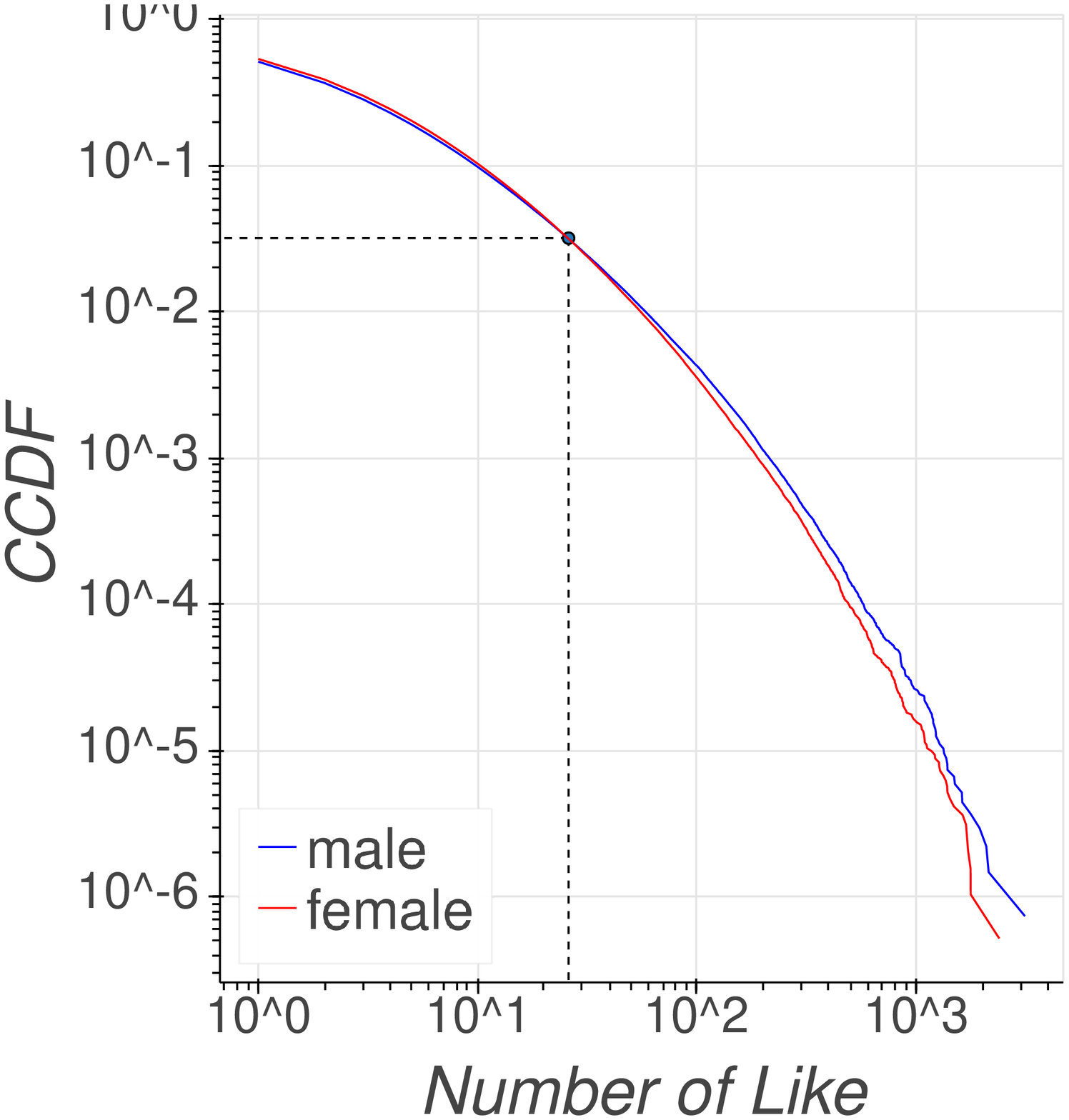}
     \caption{Received (Likes)}\label{fig:ccdflike_rec_r1u}
\end{subfigure}
     \hfill
\begin{subfigure}{.20\linewidth}
     \centering
     \includegraphics[width=3.5cm,height=2.3cm]{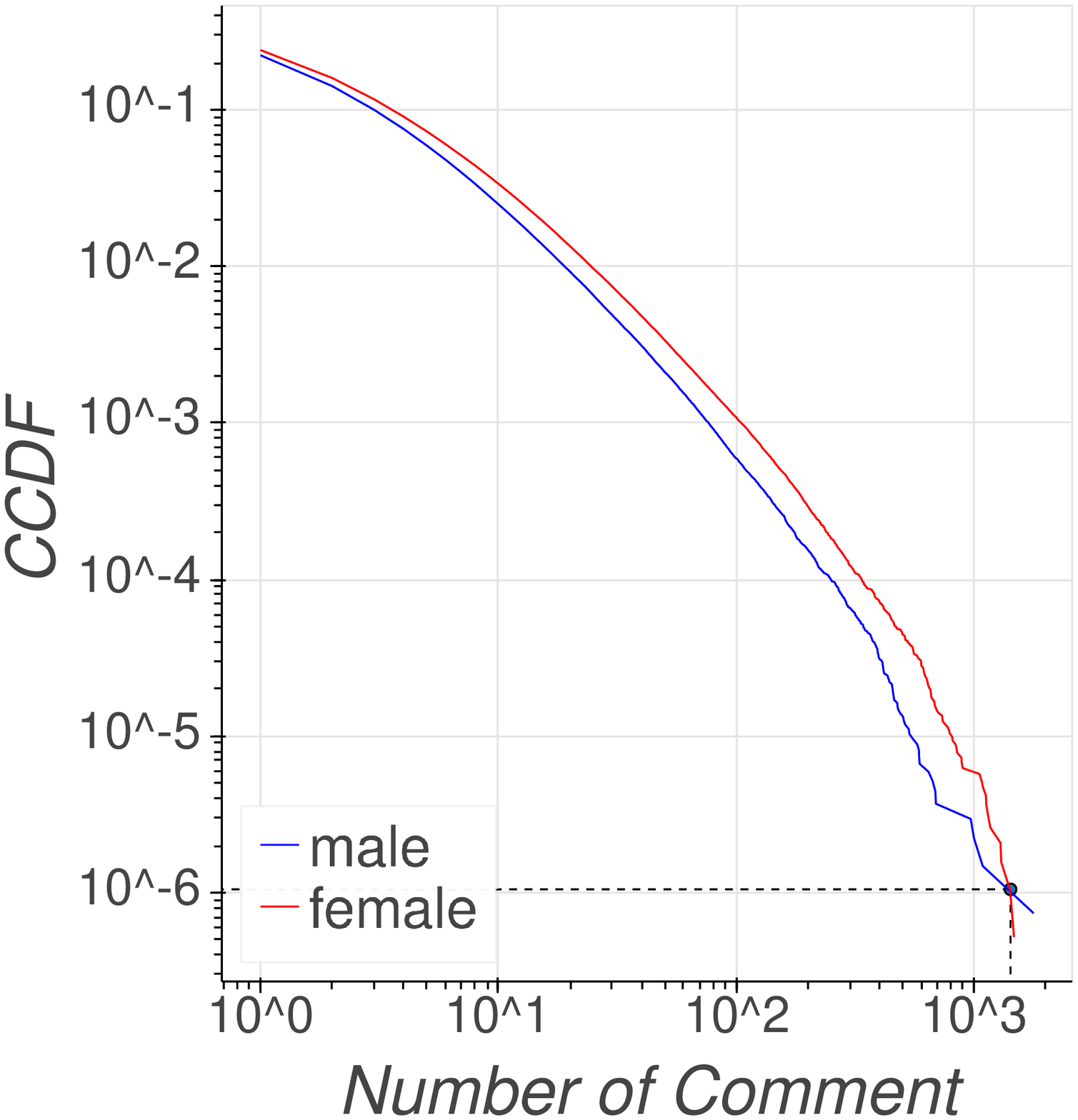}
     \caption{Received (Comments)}\label{fig:ccdfcomment_rec_r1u}
\end{subfigure}
    \hfill
 \begin{subfigure}{.20\linewidth}
     \centering
     \includegraphics[width=3.5cm,height=2.3cm]{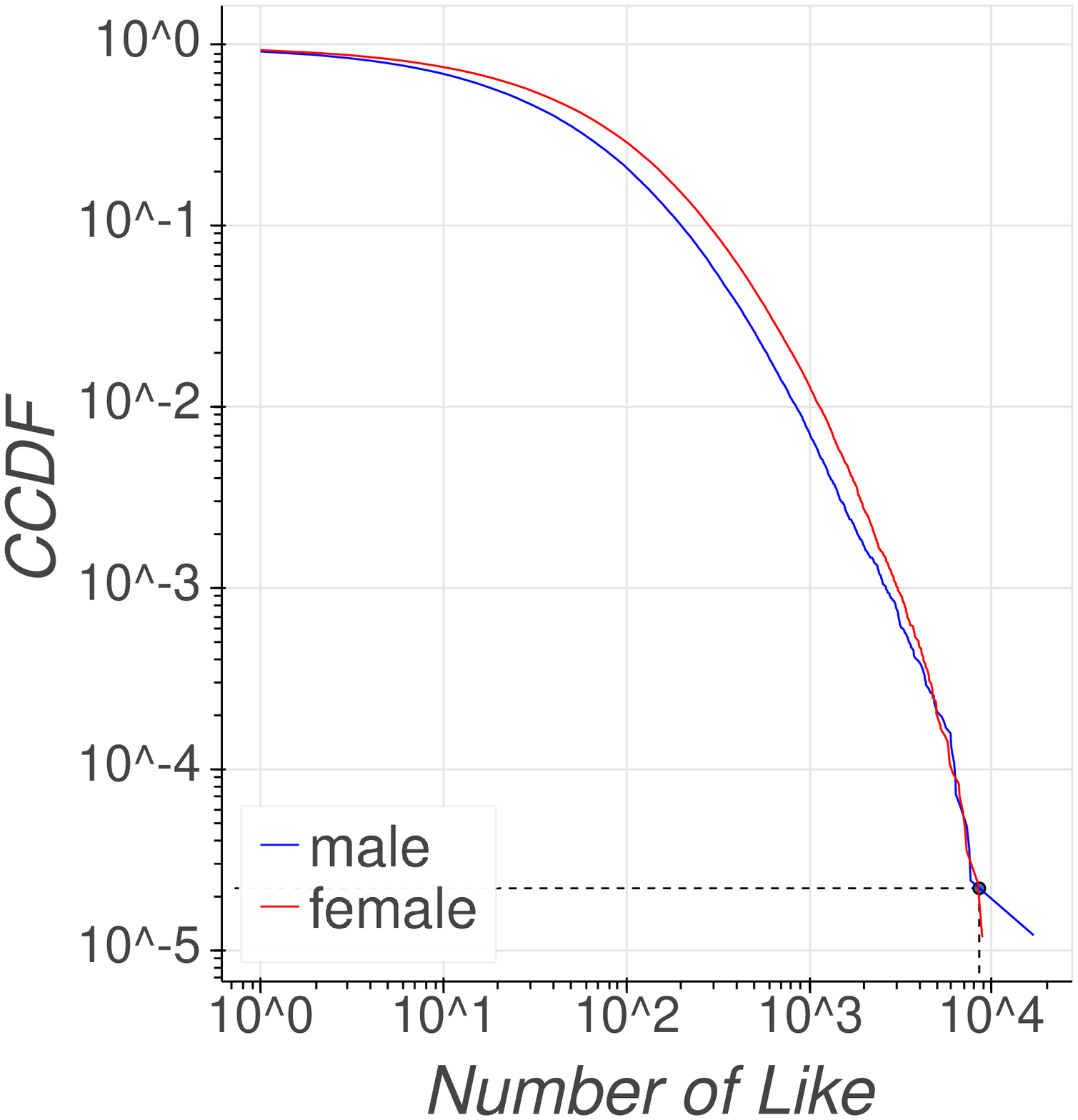}
     \caption{Sent (Likes)}\label{fig:ccdflike_sen_r1u}
 \end{subfigure}
 \hfill
 \begin{subfigure}{.20\linewidth}
     \centering
     \includegraphics[width=3.5cm,height=2.3cm]{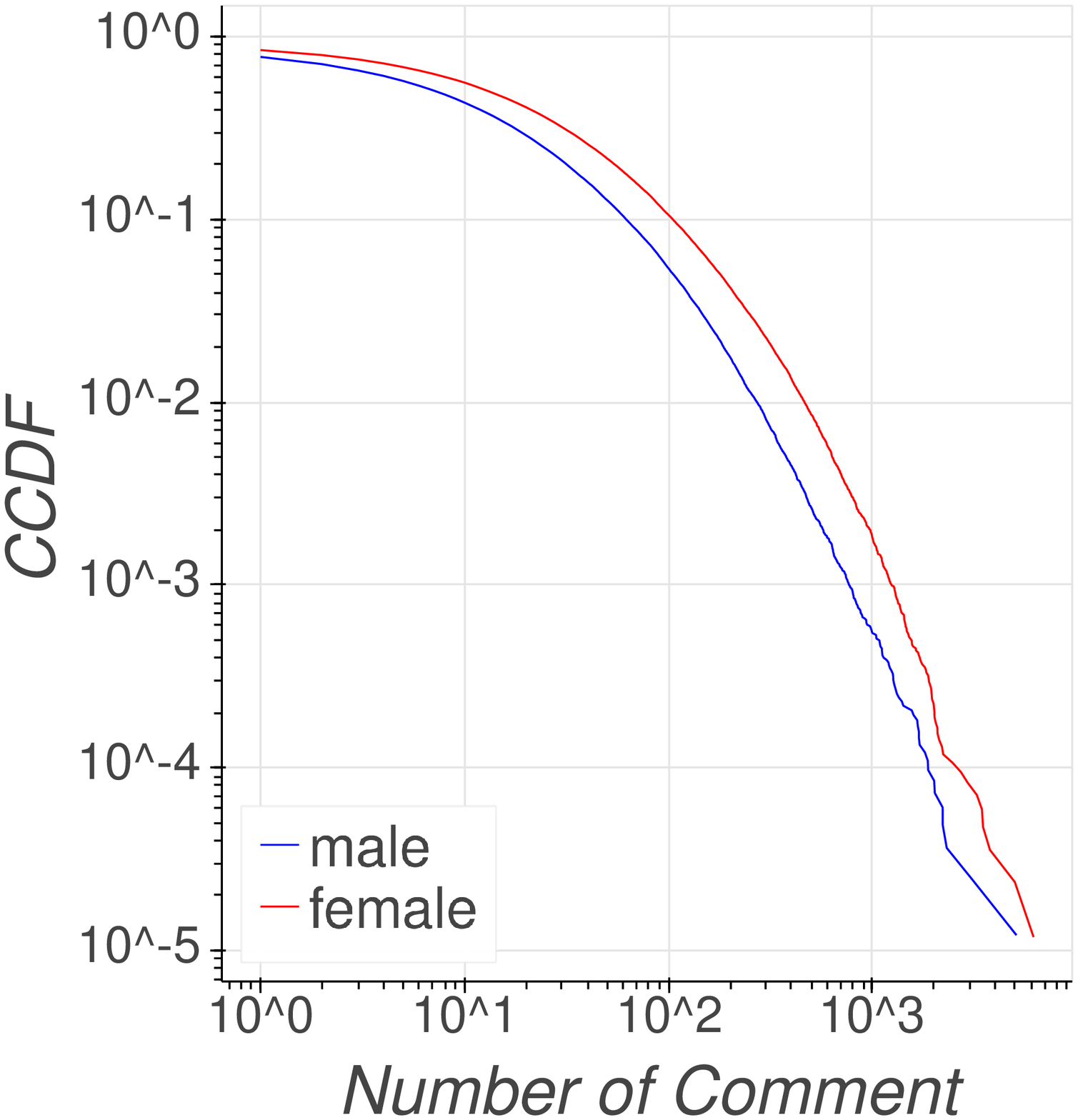}
     \caption{Sent (Comments)}\label{fig:ccdfcomment_sen_r1u}
 \end{subfigure}
\vspace{-.3cm}
\caption{Intensity of visibility (received) and endorsement (sent) on Instagram.} 
\vspace{-.3cm}
\label{fig:number_interactions_r1u}
\end{figure*}

\begin{figure*}[htp]
\centering
\begin{subfigure}{.20\linewidth}
    \centering
    \includegraphics[width=3.5cm,height=2.3cm]{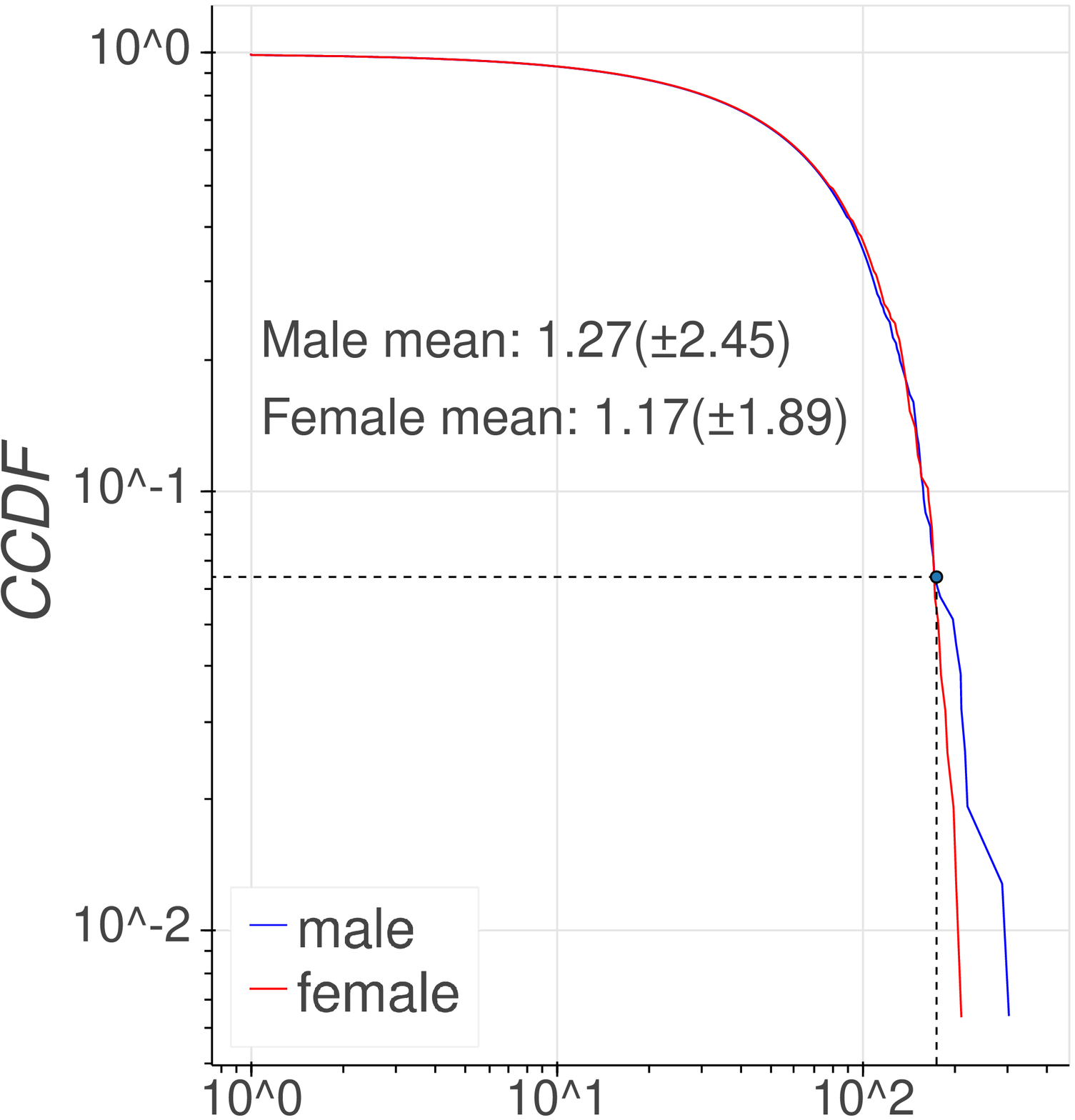}
    \caption{In-degree (Likes)}
    \label{fig:ccdflike_indegree_r1u}
\end{subfigure}
    \hfill
\begin{subfigure}{.20\linewidth}
    \centering
    \includegraphics[width=3.5cm,height=2.3cm]{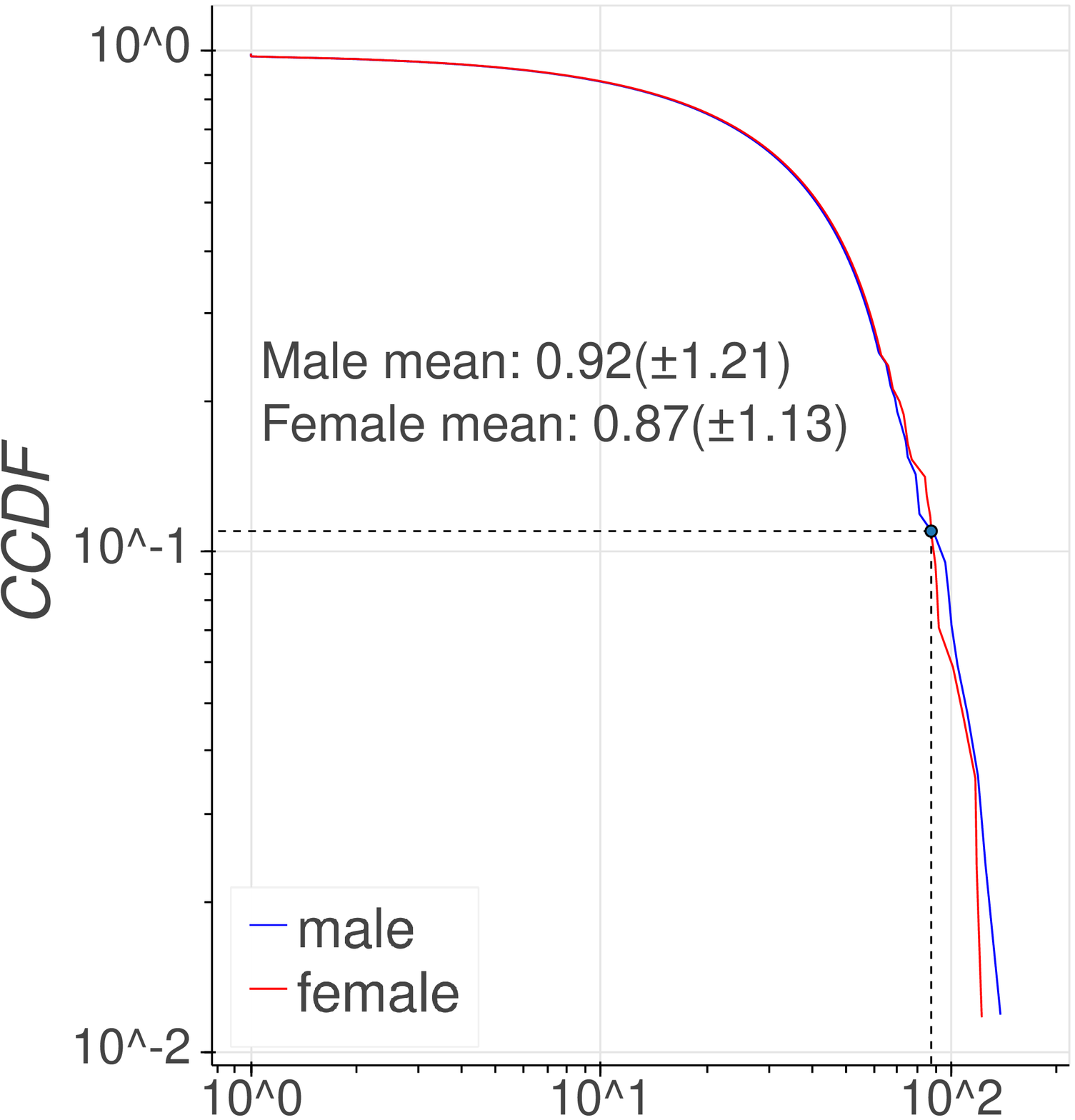}
    \caption{In-degree (Comments)}
    \label{fig:ccdfcomment_indegree_r1u}
\end{subfigure}
    \hfill
\begin{subfigure}{.20\linewidth}
    \centering
    \includegraphics[width=3.5cm,height=2.3cm]{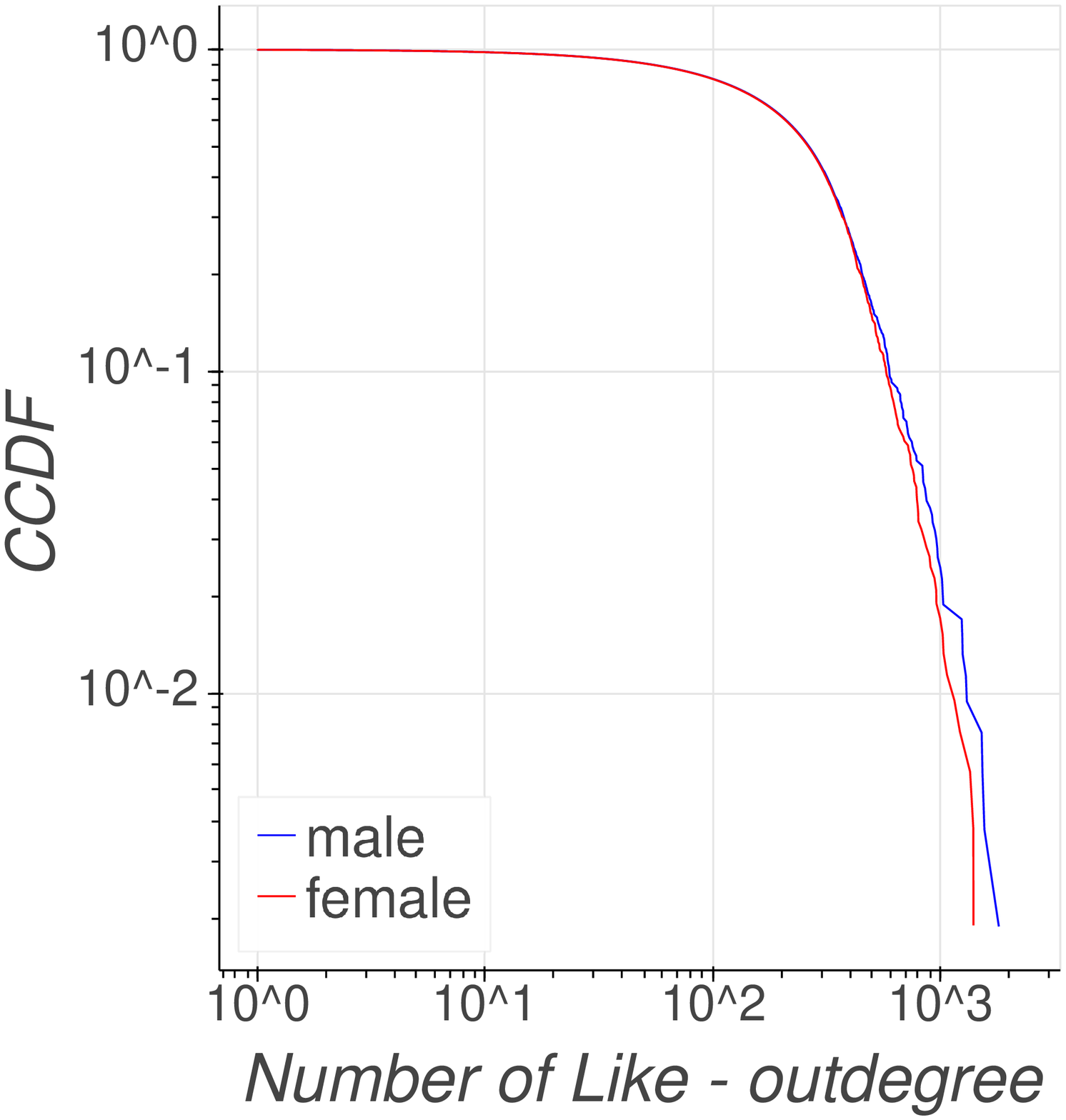}
    \caption{Out-degree (Likes)}\label{fig:ccdflike_outdegree_r1u}
\end{subfigure}
    \hfill
\begin{subfigure}{.20\linewidth}
    \centering
    \includegraphics[width=3.5cm,height=2.3cm]{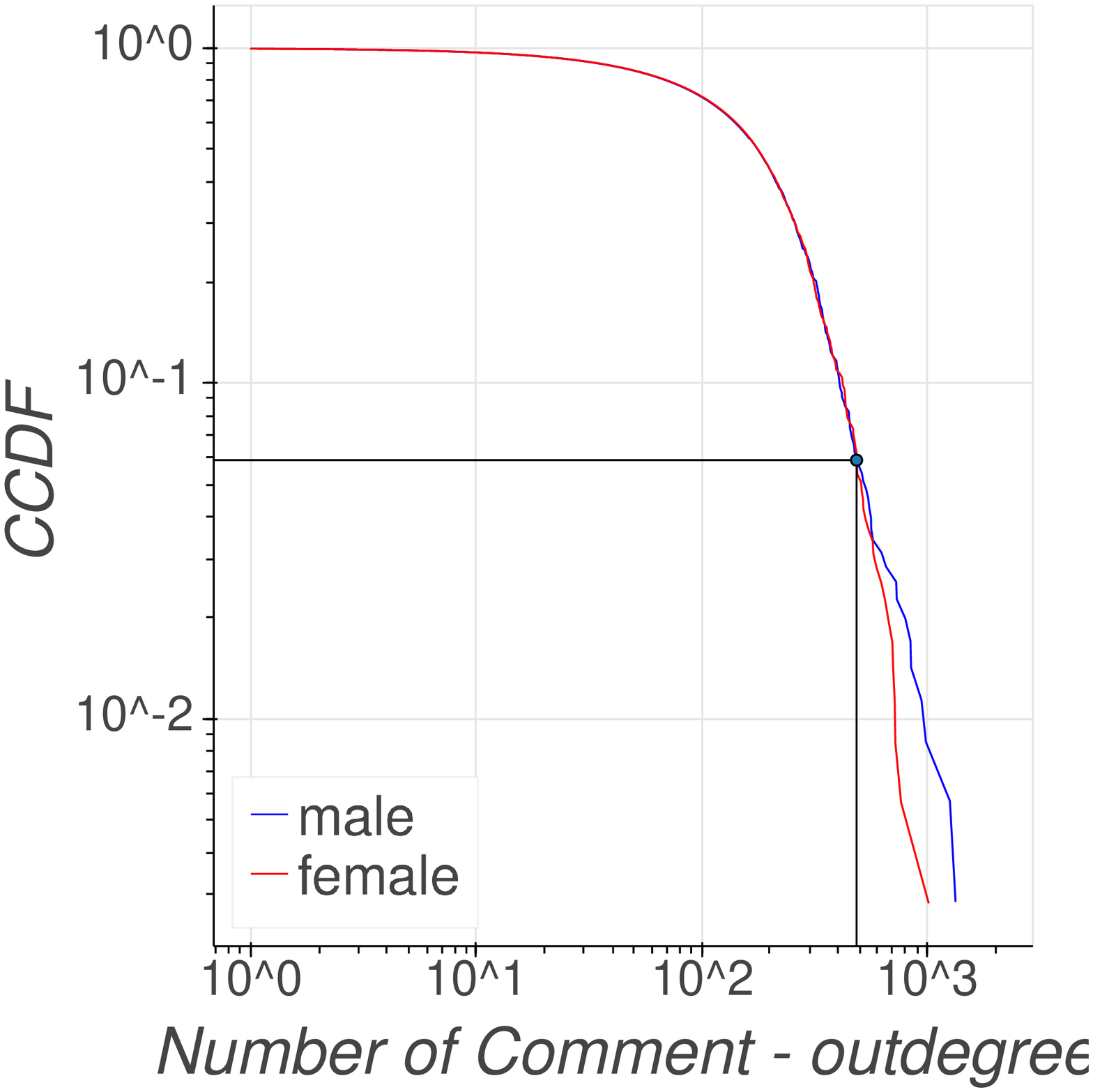}
    \caption{Out-degree (Comments)}\label{fig:ccdfcomment_outdegree_r1u}
\end{subfigure}
\vspace{-.3cm}
\caption{Degrees of visibility (received) and endorsement (sent) on Instagram. }
\vspace{-0.3cm}
\label{fig:in/out_degree_interactions_r1u}
\end{figure*}

\subsection{Instagram Dataset}
\label{apx:instagram}

\begin{table}[t]
    \centering
    \caption{Instagram gender distribution: senders vs. receivers.}
    \vspace{-2mm}
    \begin{footnotesize}
    \begin{tabular}{|l|ll|ll|} 
    \hline
            & \multicolumn{2}{c|}{Sender}                                    & \multicolumn{2}{c|}{Receiver}                                           \\ 
    \hline 
            & Male                           & Female                        & Male                             & Female                              \\ 
    \hline 
    Like    & 46.2\% (\textasciitilde{}39k) & 53.8\% (\textasciitilde{}46k) & 41.0\% (\textasciitilde{}1,143k) & 59.0\% (\textasciitilde{}1,648k)  \\
    Comment & 45.8\% (\textasciitilde{}38k)    & 54.2\% (\textasciitilde{}44k) & 40.7\% (\textasciitilde{}589k)   & 59.3\% (\textasciitilde{}857k)      \\
    Users   & 46.3\% (\textasciitilde{}40k)  & 53.7\% (\textasciitilde{}47k) & 41.3\% (\textasciitilde{}1,352k) & 58.7\% (\textasciitilde{}1,921k)    \\
    \hline
    \end{tabular}
    \end{footnotesize}
    \label{tab:basic}
\end{table}

\textbf{Collection method.} 
Stoica et al.~\cite{Stoica:www:2018:AlgoGlassCeiling} collected data of users' reactions with likes and comments on each other's posts over a $4$ year period. The data was retrieved with the Instagram API by first gathering a set of users, starting with the founder of Instagram and recursively retrieving the list of followers. Subsequently, data from the posters' perspective and how such posts receive likes and comments from other users was collected. For each interaction, the user ID, gender (derived from user names), responding type, i.e., likes or comments, and timestamps are recorded. Due to space and computation constraints, a maximum of $5$ interactions per post was sampled. Hence the interaction intensity is down-sized. The observation period in this paper spans from Jan 2015 to May 2016, starting right when the number of active Instagram users surged.

\textbf{Data characteristics.} The dataset comprises around $8$ million unique users with around $38$ million interactions among each other by considering both likes and comments before filtering. For both interaction types, we study the directed network formed by links representing users liking or commenting on another user's post.

\textbf{Prepossessing.} Since Instagram emerged to be one of the primary social platforms, especially for marketing campaigns, there is an increasing number of bot services~\cite{Jonas::2017:masterThesis} that offer to send ``likes'' for a payment. To avoid including such fake interactions in our analysis~\cite{Shirin:ICWSM:2016:TwitterGlassCeiling}, we filter out~\emph{inactive} users with only one interaction across our 17-month observation period. Roughly $57.45\%$ of users are inactive and hence removed from the following analysis. After the filtering, there are $1,375,637$ males and $1,499,145$ females. Table~\ref{tab:basic} summarizes the distribution of the dataset. 
There is a significantly higher number of receivers (roughly 30 times more) than senders who respond to posts via likes/comments due to the collection method, 
and more female senders/receivers than male ones. In terms of likes and comments, the number of likes observed is roughly $2.5$ times higher than the number of comments. The overall dataset is slightly unbalanced in terms of gender ratio. 

\textbf{Limitations.}
The data is collected starting from the founder's posts recursively (i.e., the receiver of comments/likes); therefore, the dataset exhibits a bias of this specific group of users, an unavoidable drawback when crawling information via such an API. Thus, the data may not cover the diversity of all Instagram users. 

\subsection{Single-Hop Analysis}

\begin{table}[t]
\small
\setlength{\tabcolsep}{0.4em}
    \caption{Mann-Whitney U test: the p-value on Instagram.}
    \label{T:u_test_ig}
    \vspace{-.3cm}
    \centering
    \begin{tabular}{|cc|ccc|}
    \hline & & Top 10\% & Top 1\% & Top 0.1\% \\\hline
    \parbox[t]{2mm}{\multirow{4}{*}{\rotatebox[origin=c]{90}{Comment}}} & Rec. intensity & 1.199e-156*** & 0.0*** & 4.285e-116*** \\
    & Sen. intensity & 1.996e-63*** & 0.003** & 0.102 \\
    & HI-index & 0.326 & 0.043* & 4.496e-09*** \\
    & PageRank & 0.0*** & 6.813e-66*** & 9.61e-10*** \\ \hline
    \parbox[t]{2mm}{\multirow{4}{*}{\rotatebox[origin=c]{90}{Like}}} & Rec. intensity & 0.0*** & 3.129e-82*** & 4.484e-33*** \\
    & Sen. intensity & 3.43e-48*** & 5.96e-05*** & 3.082e-4***\\
    & HI-index & 1.649e-212*** & 0.0*** & 3.517e-105*** \\
    & PageRank & 1.347e-44*** & 1.416e-163*** & 2.493e-26*** \\ \hline
    \multicolumn{5}{l}{*, **, and *** denote $p\leq 0.05$, $p \leq 0.01$, and $p\leq0.001$, respectively.} \\
\end{tabular}
\end{table}

\subsubsection{Interaction Intensity}

Using the intensity of comments and likes, we aim to answer which gender is more \emph{influential} and which gender is more likely to \emph{endorse} others. Note that, typically, nodes' influence is derived from the degree, often with an implicit assumption of equal link quality~\cite{Dong:SIGKDD:2017:StructuralHomophily, Stoica:www:2018:AlgoGlassCeiling, Yang:NAS:2012:leadershipSuccess, Avin:PLOS:2018:ElitesSocialanetwork}, which is quite imprecise.

\textbf{Visibility Intensity}.
Figures ~\ref{fig:ccdflike_rec_r1u} and~\ref{fig:ccdfcomment_rec_r1u} summarize the total number of likes and comments \textit{received} by unique female/male users, respectively. We plot the complementary cumulative distribution function (CCDF) for females and males separately. The tail of the CCDF represents the most popular users, i.e., those who received the most likes/comments (referred to as the \textit{top-ranked} users) in their respective gender. By comparing the tails of female and male CCDFs, we can see if highly visible females receive as many likes/comments as highly visible males. Following~\cite{Stoica:www:2018:AlgoGlassCeiling}, Table~\ref{T:u_test_ig} shows the statistical confidence for top-ranked user groups (Top $10\%$, $1\%$, and $0.1\%$).\footnote{Due to space constraints, please refer ~\cite{online} for more details.}

We observe that much more likes are received than comments, which is not surprising as the effort to write a comment is significantly higher than clicking a like button~\cite{Dongyan:NC:2017:LikeComment}. Specifically, the mean number of likes received per user ($5.16$) is around three times higher than that of comments received per user ($1.79$). 
In Figure~\ref{fig:ccdflike_rec_r1u}, the higher visibility values for males at the tail indicate that top-ranked males receive more likes than top-ranked females, though both male and female users receive on average $5.16$ likes. This observation fits with the glass ceiling effect pointed out in ~\cite{Avin:ITCS:2015:HomophilyGlassCeiling,Stoica:www:2018:AlgoGlassCeiling}, i.e., females fall behind males in top-ranked positions but not in the lower-ranked positions. Specifically, such a cross point happens around the top $3.2\%$ for likes. 
However, such a glass ceiling effect is not observed for comments, shown in Figure~\ref{fig:ccdfcomment_rec_r1u}. Females constantly receive more comments than males at any rank. Consequently, the average number of comments received by females is $1.95$, around $23\%$ higher than for males ($1.59$). Even though a crossing exists in the highest percentile, only four males receive more comments than females, which should probably be considered as outliers.

\textbf{Endorsement Intensity}.
Displaying the endorsement activities, Figures~\ref{fig:ccdflike_sen_r1u} and ~\ref{fig:ccdfcomment_sen_r1u} summarize the number of likes and comments \textit{sent} by unique female/male users, respectively. We plot the CCDFs for females and males separately. The tails of these CCDFs illustrate how highly active female/male users endorse others. 

The intensity of sending likes is almost three times higher than commenting, matching the CCDF of the receiver side. However, the shapes of the CCDFs are quite different: the CCDFs of likes/comments sent are higher than the CCDFs computed for the receiver side. This observation shows a mass of receivers having low visibility, but senders are more evenly distributed in terms of their endorsement efforts, i.e., the endorsement intensity varies considerably. 
We observe that females are more active than males regarding both likes and comments, as shown in Figures~\ref{fig:ccdflike_sen_r1u} and~\ref{fig:ccdfcomment_sen_r1u}, respectively. On average, females send $117$ likes (around $39\%$ more than males) and $44$ comments (around $69\%$ more than males). Note that the crossing in the highest percentile in Figure~\ref{fig:ccdflike_sen_r1u} is caused by a tiny number ($3$ users) of males (similar to Figure~\ref{fig:ccdfcomment_rec_r1u}), which should also be regarded as outliers. We notice that different interaction types in the sender point-of-view are both led by females.

 \begin{shaded*}
     \noindent\textit{\textbf{Takeaway}: Females are much more active in giving comments and likes than males, at almost all ranks. The variability of endorsement intensity is lower than the visibility intensity.}
 \end{shaded*}

\vspace{-.3cm}
\subsubsection{Interaction Degree}

Different from the previous subsection, we resort to the interaction degree to answer the gender difference in achieving high visibility and active endorsement. 

\textbf{Visibility Degree}.
Figures~\ref{fig:ccdflike_indegree_r1u} and ~\ref{fig:ccdfcomment_indegree_r1u} summarize the numbers of unique users from whom a particular user receive likes or comments, i.e., in-degree. We separate males and females and plot the CCDFs of in-degree for likes and comments. Different from \cite{Shirin:ICWSM:2016:TwitterGlassCeiling}, the CCDFs of in-degrees do not show a strong power-law behavior, i.e., a significant fraction of users account for a broad spectrum of degree, which can be regarded as a data-dependence characteristic.

The average degrees of likes and comments are $1.21$ and $0.89$, respectively. Although the difference between the average like and comment intensity is around a factor three, the resulting degree difference through these two types of interaction is lower. This can be explained by the frequency and repetitive interactions over certain user pairs, indicating a stronger tie. 
Figure~\ref{fig:ccdflike_indegree_r1u} again shows a glass ceiling effect: females attain higher degrees than males for low and medium percentiles. However, among the $6.4\%$ top-ranked males ($113k$ individuals) have consistently higher degrees than their female counterparts. 
As for the degree established through comments (Figure~\ref{fig:ccdfcomment_indegree_r1u}), there is no visible difference between males and females below the top-ranked $11\%$ ($9k$ users) and a minor gender gap for higher ranked users. Recall that females receive significantly higher comment intensity than males. Such a discrepancy can be possibly explained by how such comments are distributed across different senders. Females appear to receive comments from a smaller group of users with higher intensity, whereas males receive comments from a larger group in lower intensity.

\begin{shaded*}
     \noindent\textit{\textbf{Takeaway}: For likes and comments, a clear glass ceiling effect can be observed: top-ranked males receive visibility from a larger set of users than females.}
 \end{shaded*}

\textbf{Endorsement Degree.}
Figures~\ref{fig:ccdflike_outdegree_r1u} and ~\ref{fig:ccdfcomment_outdegree_r1u} summarize how many unique users are supported by a particular female or male user, i.e., out-degree, in CCDF plots. 

As the total number of senders is much lower than that of receivers, the average out-degrees, $45.21$ and $21.27$ for likes and comments, respectively, are much higher than the average in-degrees per user. 
With the above observations, we have valuable insights summarized as follows. In the intensity analysis, males perceive higher visibility with likes and females with comments. In the endorsement aspect, females are more active regarding both interactions. However, recall that in the degree analysis, males dominate in all scenarios (in/out-degree on like/comment). This implies that females prefer to interact with a smaller group of users than males do, which means that males are part of larger groups to which they send or from which they receive likes and comments.

\begin{shaded*}
     \noindent\textit{\textbf{Takeaway}: Top-ranked males support a larger group of users with lower intensity, whereas females support a small group of users with higher intensity.}
 \end{shaded*}

\subsection{Multi-hop Analysis} \label{sec:multihop}
The strength of the gender differences varies across the two visibility measures, i.e., degree and intensity, under the two interaction types, likes and comments. In this section, we take a complementary perspective to investigate influence, beyond the one-hop neighborhood. Our objective here is to combine both the degree and intensity simultaneously for direct and indirect social ties, i.e., friends of friends. To this end, we propose to use two centralities to quantify nodes' influence, a novel centrality \emph{HI-index}, inspired by the H-index ranking of influential authors~\cite{Alonso:JI:2009:Hindex}, and PageRank~\cite{Page:SI:1999:PageRank}. 

\begin{figure}
\centering
\hfill
\begin{subfigure}{.40\linewidth}
    \centering
    \includegraphics[width=3.5cm,height=2.6cm]{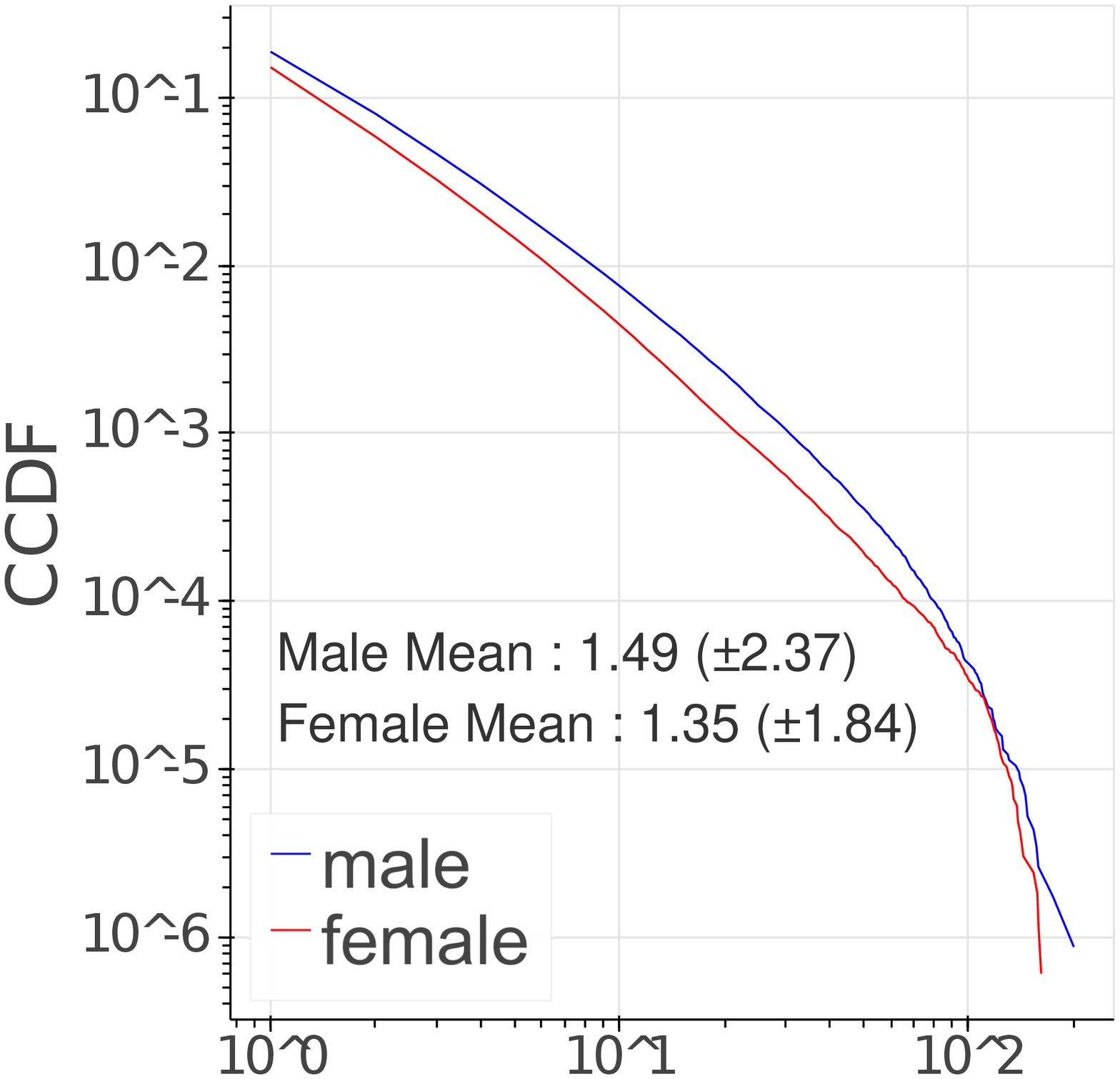}
    \caption{HI-index (Likes)}
    \label{fig:ccdflike_hindex_rec_r1u}
\end{subfigure}
    \hfill
\begin{subfigure}{.40\linewidth}
    \centering
    \includegraphics[width=3.5cm,height=2.6cm]{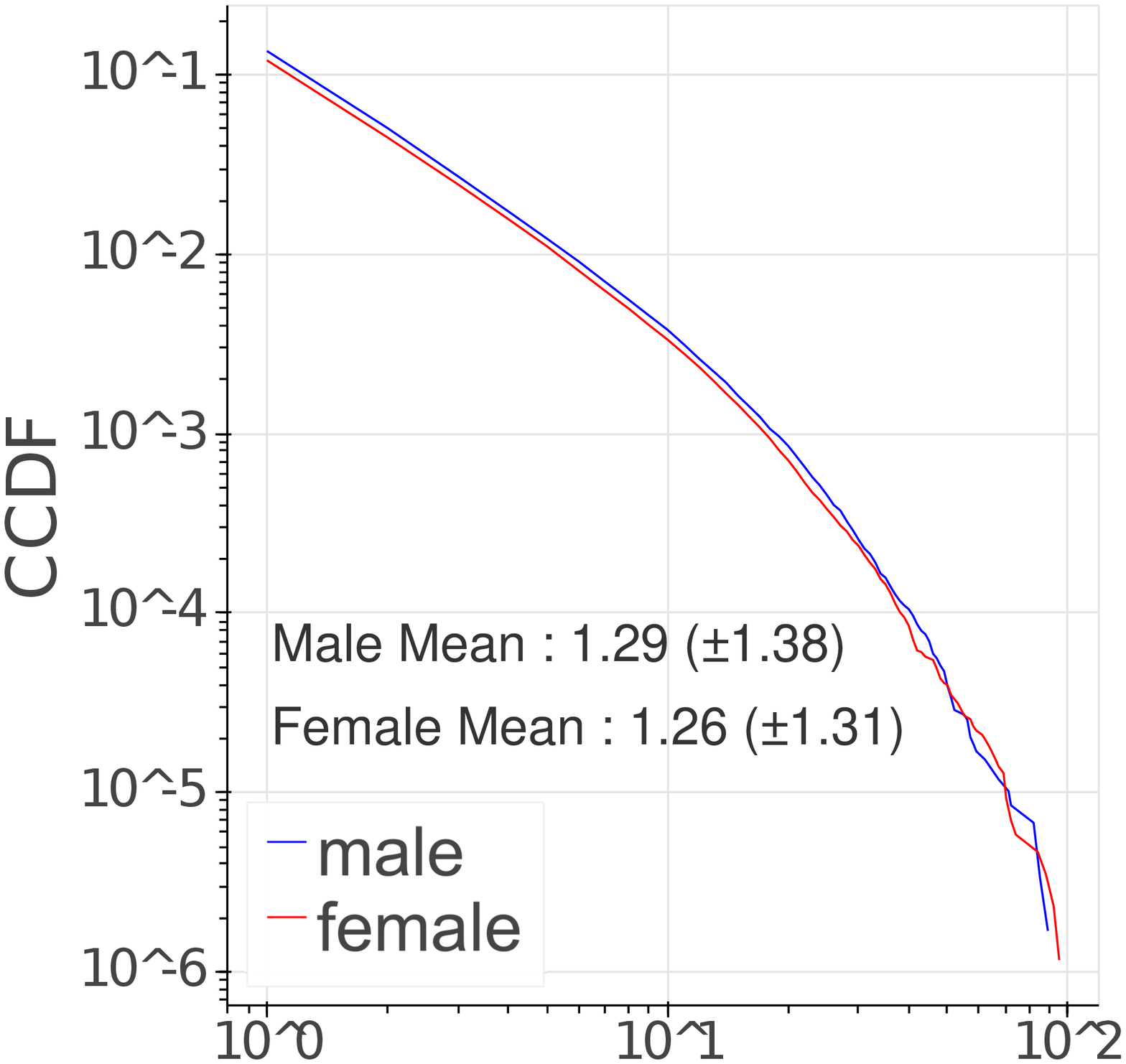}
    \caption{HI-index (Comments)}
    \label{fig:ccdfcomment_hindex_rec_r1u}
\end{subfigure}
\hfill
\vspace{-.3cm}
\caption{HI-index of visibility (received) on Instagram.} 
\label{fig:h_index_interactions_r1u}
\end{figure}

\subsubsection{HI-index}
The definition of the HI-index is derived from the H-index~\cite{Alonso:JI:2009:Hindex}, which attributes an author's productivity and citation impact by the number of publications that have been cited at least a certain number of times. In other words, the H-index relies on the degree of publications in the network formed by citations. We extend it to consider the interaction intensity as follows.

\begin{definition} The
  \emph{HI-index} of a user $v_i$ is defined as the maximum number $H$ such that $v_i$ has at least $H$ neighbors who interact with $v_i$ and any other users in the network at least $H$ times. 
  Let $N(v_i,n)$ denote the number of $v_i$'s one-hop neighbors who interact with others at least $n$ times.
  We can formulate the HI-index of $v_i$ as
  \begin{equation}\small
    H(v_i)=\max _{n} \min_{n \in \mathbb{I}^+} (N(v_i,n), n).
  \end{equation}
\end{definition}

The HI-index goes beyond a single hop analysis by evaluating all interactions of neighbors, and considers more than the interactions between the author of a post and its supporters. Thus the two-hop neighborhood of a user determines its HI-index.

Figure~\ref{fig:h_index_interactions_r1u} summarizes the CCDFs of the HI-index for males and females for likes and comments. The tails of the CCDFs represent the highly ranked users who have large HI-index values. The average HI-index values for likes and comments are $1.41$ and $1.28$, respectively.
We observe that the HI-index values of males exceed those of females at the same percentiles in Figure~\ref{fig:ccdflike_hindex_rec_r1u}. 
In other words, males receive higher visibility than females directly and indirectly in the social network. On the other hand, in Figure~\ref{fig:ccdfcomment_hindex_rec_r1u}, the male and female CCDFs are much closer to each other regarding comments, showing no remarkable difference. Recall the single-hop analysis in Section~\ref{sec:instagram} that females attain higher centrality values for comments. Females only attain similar visibility as males when considering their direct and indirect neighbors. Females' HI-index is generally lower than males because of the dual emphasis of interaction intensity and degree. Remember from the previous analysis that females tend to establish higher intensity interactions but in lower quantities than males. As the HI-index considers the interaction in both the single-hop and two-hop neighborhoods, females reach lower HI-index values than visibility measures using intensity only.

\begin{shaded*}
     \noindent\textit{\textbf{Takeaway}: When combining the number of direct and indirect interactions and their intensity in the HI-index, males achieve higher or equivalent visibility compared to females (i.e., the blue line is higher or comparable to the red line in Figure~\ref{fig:h_index_interactions_r1u}). }
 \end{shaded*}

\subsubsection{PageRank}

Another widely adopted metric to quantify the influence over an entire network is the PageRank centrality~\cite{Page:SI:1999:PageRank}, devised initially to sort web pages by their popularity. A page's popularity is measured by the number of times it is linked to by other pages, weighed by its popularity in turn. After that, PageRank recursively computes the steady-state probability of being at a page when following links at random.\footnote{We mainly exploit PageRank for analysis due to the scalability on massive networks, instead of other centrality measures, e.g., betweenness and closeness~\cite{Page:SI:1999:PageRank}.}

\begin{figure}[t]
\centering
\hfill
\begin{subfigure}{.40\linewidth}
    \centering
    \includegraphics[width=3.5cm,height=3.2cm]{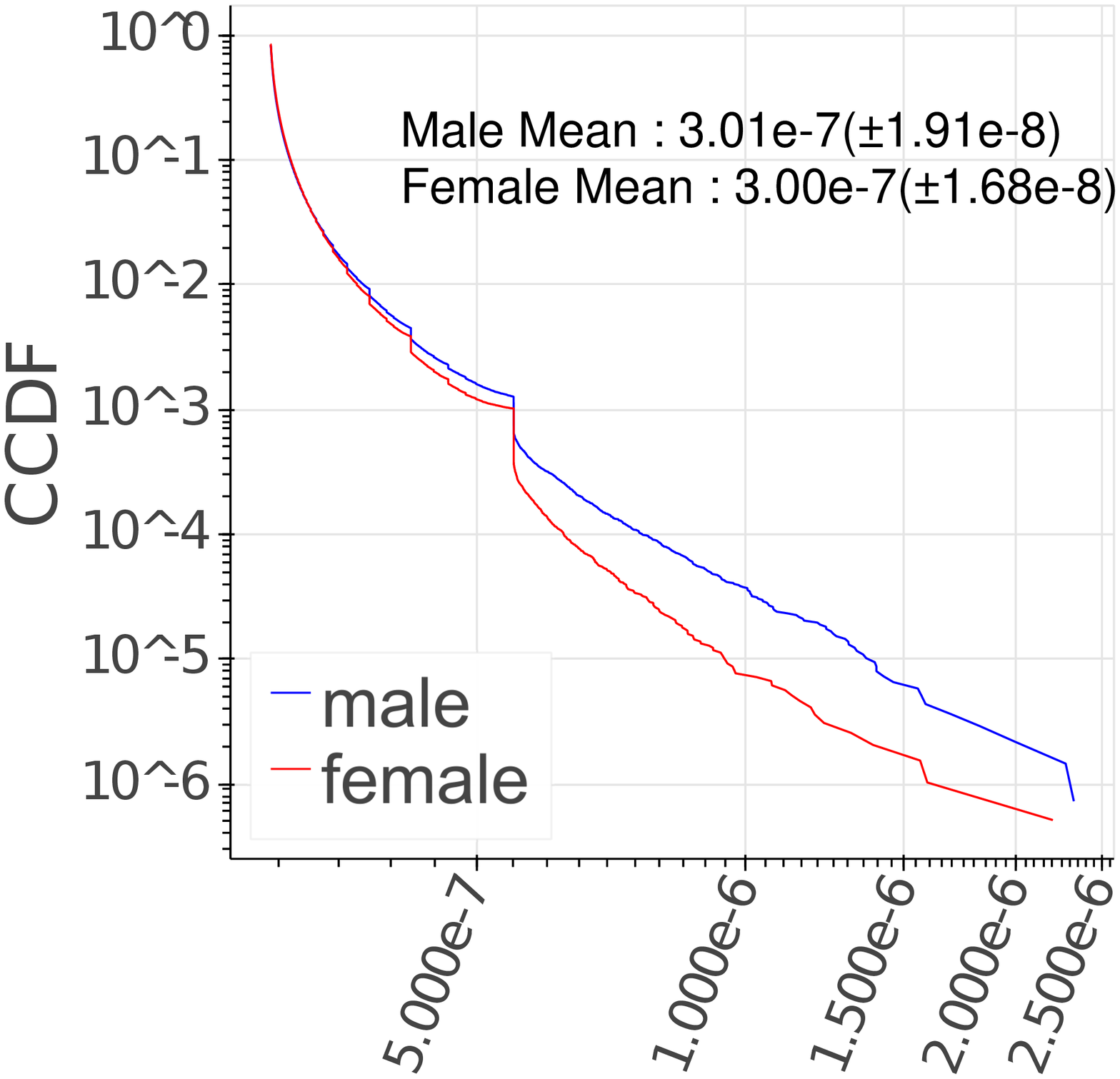}
    \caption{PageRank (Likes)}
    \label{fig:ccdflike_PageRank_r1u}
\end{subfigure}
    \hfill
\begin{subfigure}{.40\linewidth}
    \centering
    \includegraphics[width=3.5cm,height=3.2cm]{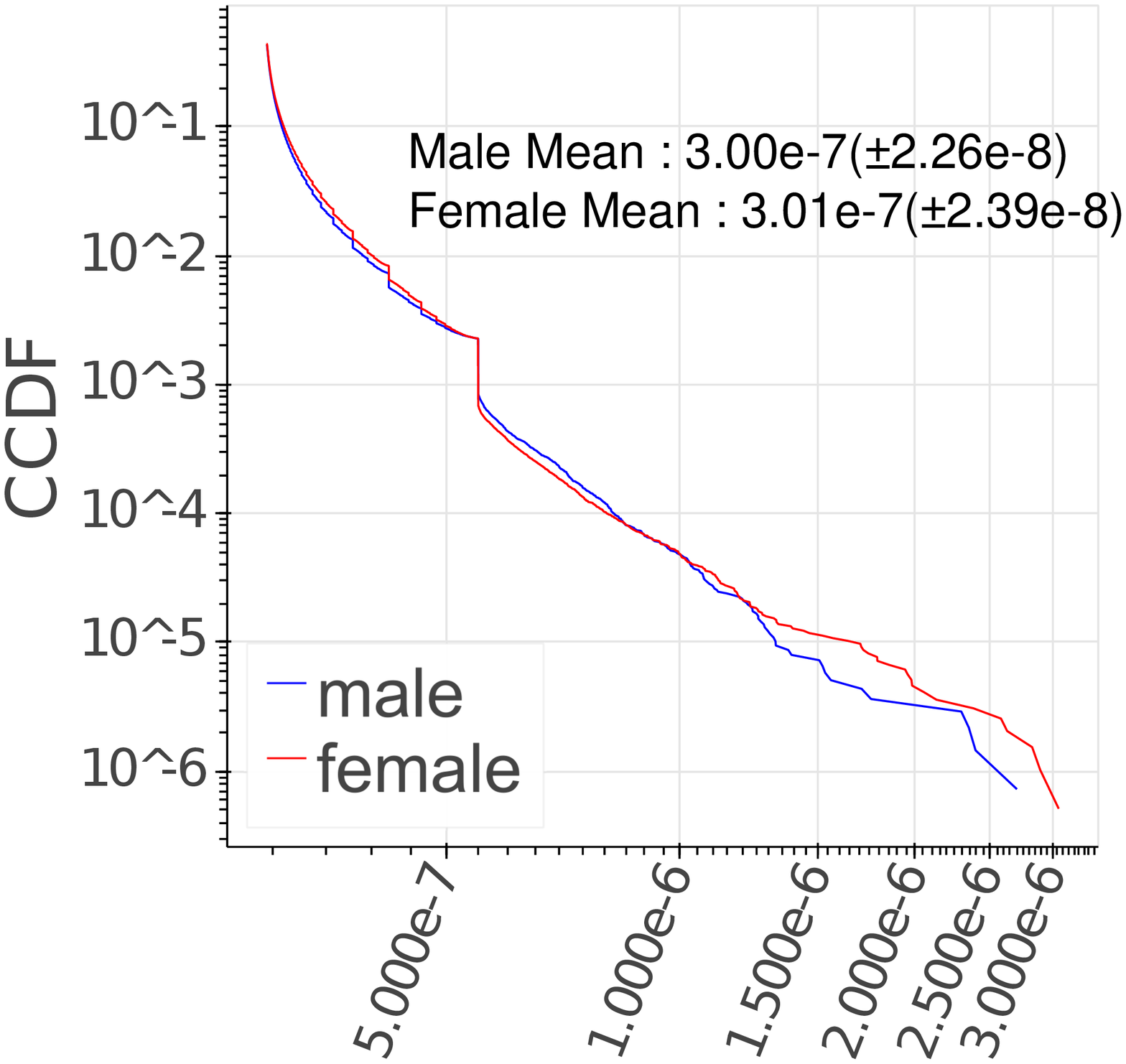}
    \caption{PageRank (Comments)}
    \label{fig:ccdfcomment_PageRank_r1u}
\end{subfigure}
\hfill
\vspace{-.3cm}
\caption{PageRank of visibility (received) on Instagram.} 
\label{fig:pagerank_interactions_r1u}
\end{figure}

Both HI-index and PageRank measure the visibility level of nodes taking the degree and the intensity of interactions into account. However, there are differences. In HI-index, for a user to achieve a high value, a sufficient number of highly visible neighbors is necessary. In other words, HI-index requires to exceed two thresholds, interaction partner quality and quantity. In PageRank, a user may be highly visible, even if the number of neighbors is not very high if, in turn, their neighbors are highly visible.

We summarize the PageRank results for likes and comments in Figures~\ref{fig:ccdflike_PageRank_r1u} and ~\ref{fig:ccdfcomment_PageRank_r1u}, respectively.  
In Figure~\ref{fig:ccdflike_PageRank_r1u}, one can observe that females attain higher PageRank visibility than males from the starting up to $5\%$ percentile. However, medium and top-ranked males outperform females significantly, shown by their higher CCDF. 
In contrast to HI-index, top-ranked females attain higher PageRank visibility than top-ranked males in Figure~\ref{fig:ccdfcomment_PageRank_r1u}. This is because PageRank considers the visibility of neighbors without any threshold of degree, which leads to that the females with low degrees but high intensity can be ranked high in terms of PageRank.
Figure~\ref{fig:ccdfcomment_PageRank_r1u} shows that females' PageRank visibility for comments is dampened compared to the pure comment intensity (Figure~\ref{fig:ccdfcomment_rec_r1u}) yet the top-ranked females receive higher visibility than males due to the higher intensity females receive from their supporters.

\begin{shaded*}
     \noindent\textit{\textbf{Takeaway}: When measuring the visibility by the PageRank centrality, females face glass ceiling effects in the like network, while they reach higher visibility for comments. }
 \end{shaded*}

\section{Glass Ceiling on Facebook}

In this section, we first introduce the second dataset containing user interactions on Facebook. Then, we will focus more on the intensity of received interactions and a new interaction type (tag) to complement the Instagram analysis.
\subsection{Facebook Dataset}
\label{apx:fb}
\textbf{Collection method.} Using the Facebook API, we collected data from users who study at $25$ university departments. The users comprise $1870$ voluntary senior students of the before-mentioned departments and all interactions between them are retrieved. For each interaction, the user ID, gender (derived by questionnaire), interaction type, and timestamps are recorded. In addition, we also collected user profiles, e.g., academic standing and hometown, by questionnaires. The period of iterations spans from March 2008 to May 2016, and $97.26\%$ of interactions are after August 2012. 

\textbf{Data characteristics.} The dataset comprises around 20 million interactions by $1870$ unique users ($765$ males and $1,105$
females). 
In addition to likes and comments, Facebook supports a third type of interactions where users can refer to each other via \emph{tags} in posts. Different from the Instagram dataset, we keep all interactions without any filtering.
Table~\ref{tab:basic_fb} summarizes the main statistics of how males/females interact with each other through likes, comments and tags. Note that the percentages on the receiver side are very similar to those observed on Instagram, while we see a higher female participation among the sender numbers in this dataset.

\begin{table}[t]
    \centering
    \caption{Facebook gender distribution: senders vs. receivers.}
    \vspace{-2mm}
    \begin{footnotesize}
    \begin{tabular}{|l|ll|ll|} 
    \hline
            & \multicolumn{2}{c|}{Sender}                                    & \multicolumn{2}{c|}{Receiver}                                           \\ 
    \hline 
            & Male                           & Female                        & Male                             & Female                              \\ 
    \hline 
    Like    & 41.1 \% (750) & 58.9\% (1079) & 40.9\% (743) & 59.1\% (1074)  \\
    Comment & 40.9\%(750)    & 59.1\% (1087) & 40.7\% (735)   & 59.3\% (1073)      \\
    Tags   & 40.6\% (704)  & 59.4\% (1031) & 37.3\% (540) & 62.7\% (910)    \\
   Users   & 40.8\% (763)  & 59.2\% (1104) & 40.9\% (748) & 59.1\% (1082)    \\
    \hline
    \end{tabular}
    \end{footnotesize}
    \label{tab:basic_fb}
\end{table}

\textbf{Limitations.}
The data is collected over a long time frame but for a rather small and homogeneous group of students. Thus it is not straight-forward to generalize from the findings in this dataset to the general population of Facebook users.

\subsection{Effect Analysis }
\noindent\textbf{Single-hop Analysis.}
Figures~\ref{fig:fb_ccdftag_intensity} and \ref{fig:fb_ccdftag_indegree} present CCDFs for tags. Table~\ref{T:u_test_fb} shows the statistical confidence for different user groups. Different from comments and likes, females dominate males in terms of receiving intensity and in-degree. Females show stronger visibility in any rank than males, almost twice of males. From the sending side, females are also more active in endorsing others. 
As users on Facebook are closer than on Instagram, users with high visibility are more active in endorsement. However, we observe the glass ceiling for tag's outdegree, indicating top-ranked males endorse more actively than their female counterparts.\footnote{Please refer to \cite{online} for the complete results of the Facebook dataset.}

 \begin{shaded*}
     \noindent\textit{\textbf{Takeaway}: In terms of received like, comment and tag intensities, females are more visible than males in any rank, showing no sign of glass ceiling.}
 \end{shaded*}
\begin{figure*}[t]
\hfill
\begin{subfigure}{.20\linewidth}    
    \centering
    \includegraphics[width=3.5cm,height=2.6cm]{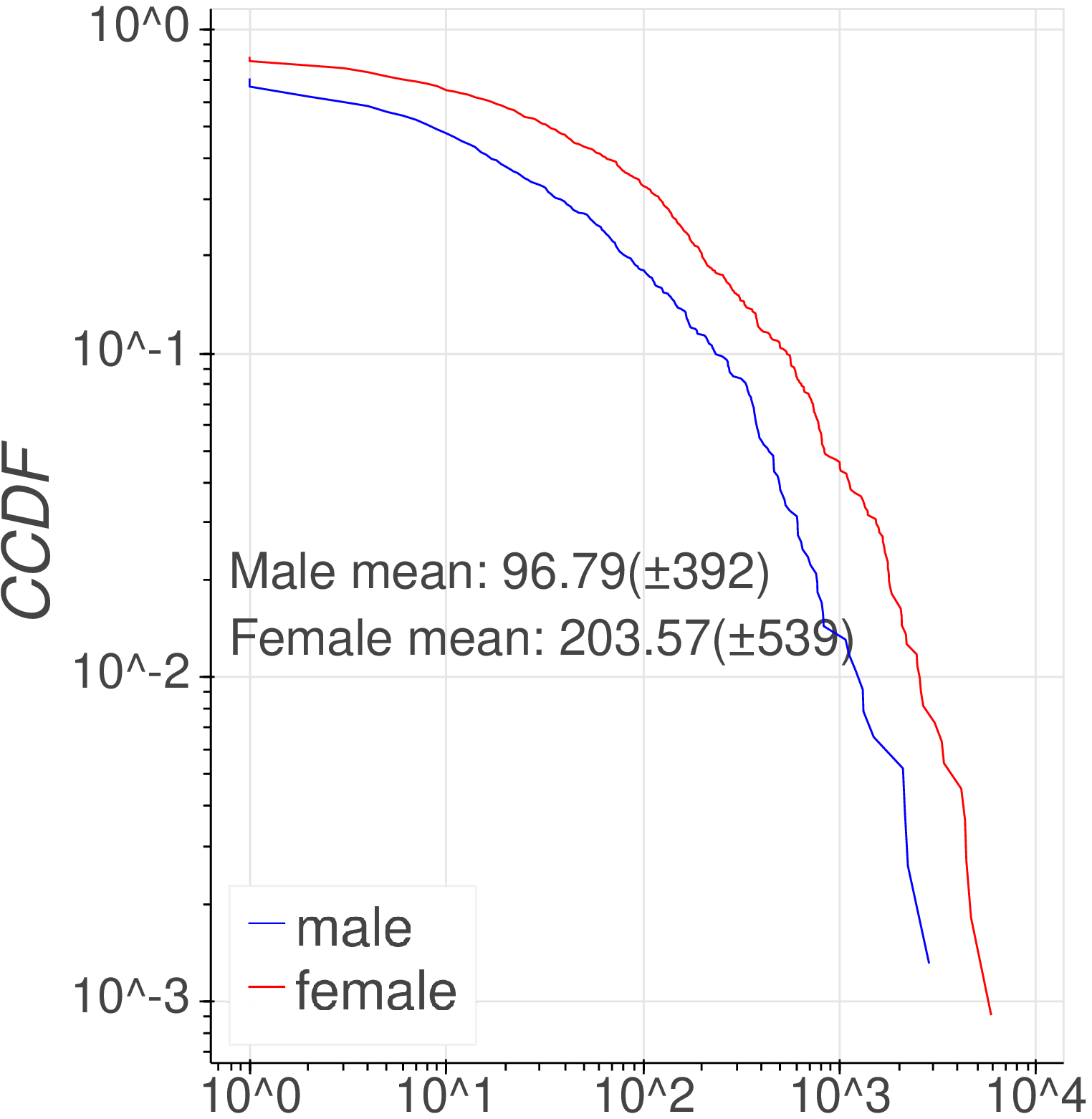}
    \caption{Received (Tags)}
    \label{fig:fb_ccdftag_intensity}
\end{subfigure}
    \hfill
\begin{subfigure}{.20\linewidth}    
    \centering
    \includegraphics[width=3.5cm,height=2.6cm]{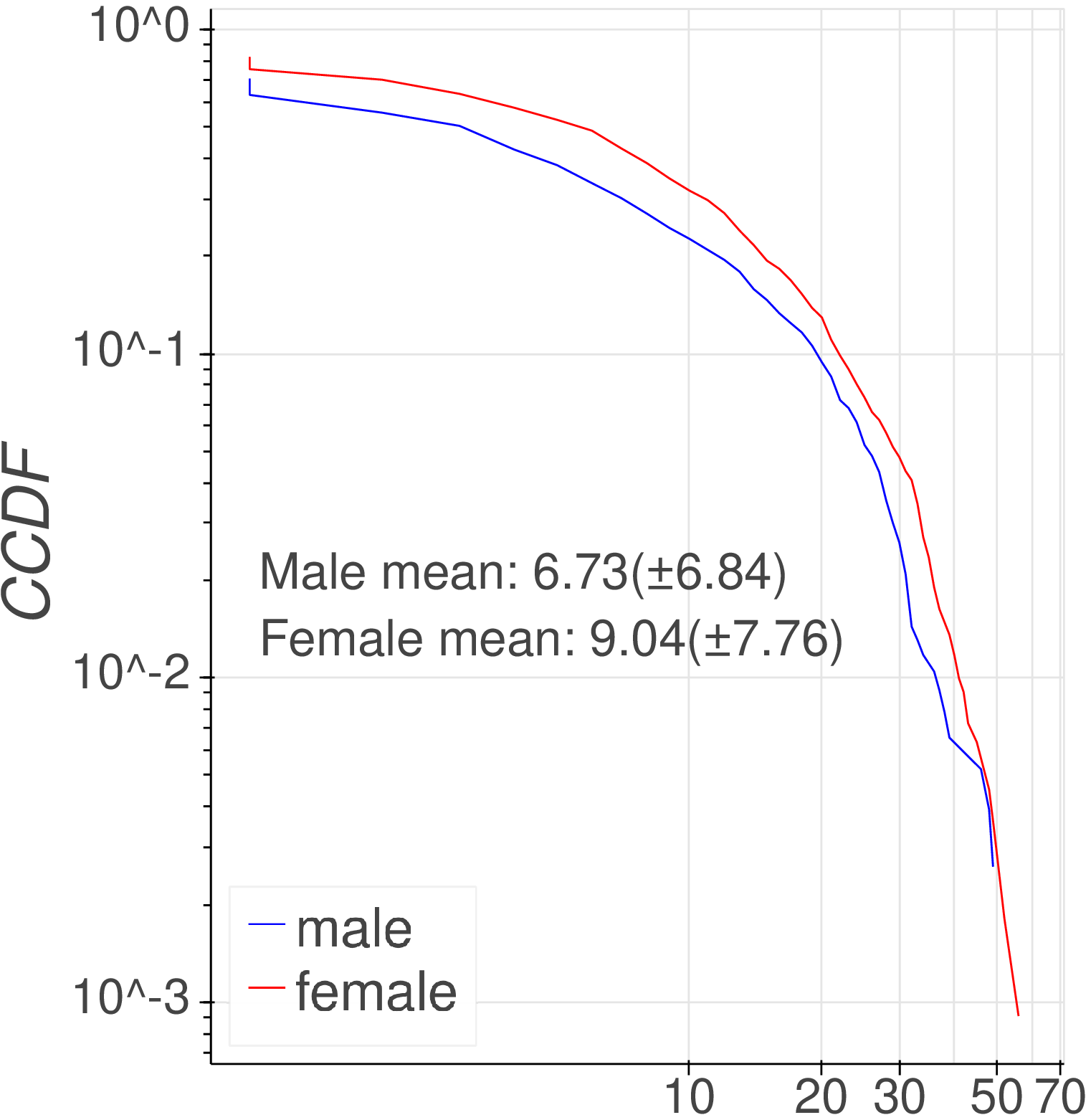}
    \caption{In-degree (Tags)}
    \label{fig:fb_ccdftag_indegree}
\end{subfigure}
\hfill
\begin{subfigure}{.20\linewidth}
    \centering
    \includegraphics[width=3.5cm,height=2.6cm]{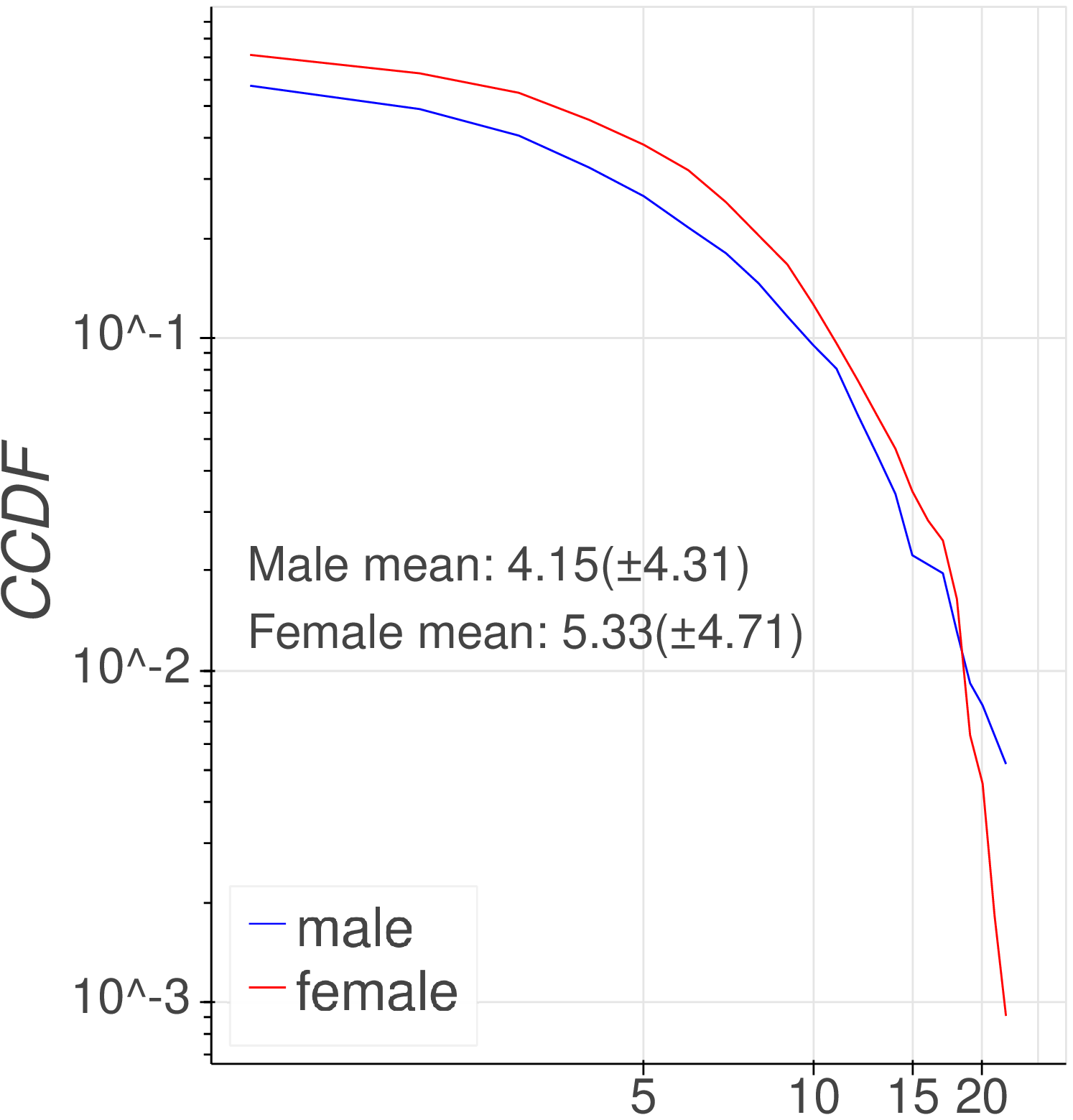}
    \caption{HI-index (Tags)}
    \label{fig:fb_ccdftag_HI-index}
\end{subfigure}
   \hfill
\begin{subfigure}{.20\linewidth}    
    \centering
    \includegraphics[width=3.5cm,height=2.8cm]{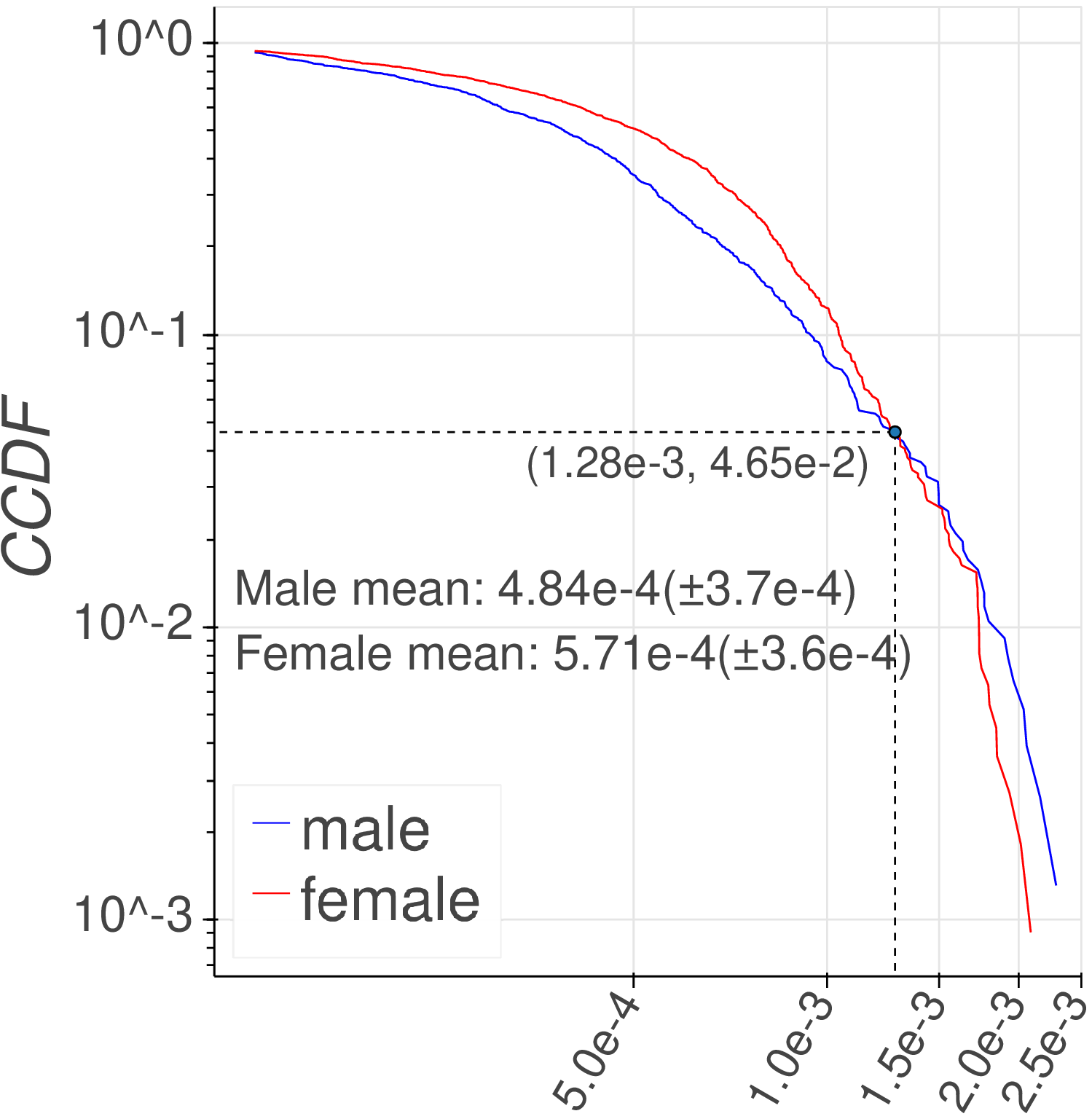}
    \vspace{-2.5mm}
    \caption{PageRank (Tags)}
    \label{fig:fb_ccdftag_pagerank}
\end{subfigure}
    \hfill

\vspace{-.3cm}
\caption{Visibility (received) in terms of Tag on Facebook.}
\label{fig:fb_ccdf}
\vspace{-.2cm}
\end{figure*}

\noindent\textbf{Multi-hop Analysis.}
We focus on the multi-hop analysis for tags due to its contrasting trend from the single-hop analysis. While females clearly dominate males in terms of intensity and indegree,  strong glass-ceiling effects are observed in PageRank, shown in Fig.~\ref{fig:fb_ccdftag_pagerank}. Zooming into Fig.~\ref{fig:fb_ccdftag_HI-index}, the mean value of the HI-index is $4.15$ and $5.33$ for males and females, respectively, indicating that on average male/female users interact with slightly more than 4 and 5 other users respectively who also interact with $4$ or $5$ others. However the top $1\%$ of male users have an HI-index of up to $23.62$ users, while top female users only reach $23.89$.

 \begin{shaded*}
     \noindent\textit{\textbf{Takeaway}: There is a strong glass ceiling for tags, from multi-hop measures but not sing-hop on Facebook dataset. }
 \end{shaded*}

Finally, we also summarize in Table~\ref{table:glassceiling} if a glass ceiling effect is exhibited by the different centrality measures and types of interactions for both the Instagram and the Facebook datasets.

\begin{table}[t]
\small
\setlength{\tabcolsep}{0.6em}
    \caption{Mann-Whitney U test: the p-value on Facebook.}
    \label{T:u_test_fb}
    \vspace{-.3cm}
    \centering
    \begin{tabular}{|cc|ccc|}
    \hline & & Top 10\% & Top 1\% & Top 0.1\% \\\hline
    \multirow{4}{*}{Tag} & Rec. intensity & 1.812e-16*** & 7.626e-4*** & 0.0*** \\
    & In degree & 5.423e-05*** & 0.094 & 0.0*** \\
    & HI-index & 5.423e-05*** & 0.094 & 0.0*** \\
    & PageRank & 0.099 & 0.007** & 0.0*** \\ \hline
    \multicolumn{5}{l}{*, **, and *** denote $p\leq 0.05$, $p \leq 0.01$, and $p\leq0.001$, respectively.} \\
    \end{tabular}
\end{table}

\begin{table}
\vspace{.1cm}
\caption{Summary of the glass ceiling on Instagram and Facebook. \xmark~ indicates that a glass ceiling effect is observed.}
\label{table:glassceiling}
\vspace{-.3cm}
\centering
\resizebox{\columnwidth}{!}{
\begin{tabular}{c|c|c;{1pt/1pt}c|c;{1pt/1pt}c;{1pt/1pt}c|}
\cline{3-7}
\multicolumn{2}{c|}{\multirow{2}{*}{}}                                     & \multicolumn{2}{c|}{{\cellcolor[rgb]{0.69,0.69,0.69}}\textbf{Instagram}}                                   & \multicolumn{3}{c|}{{\cellcolor[rgb]{0.69,0.69,0.69}}\textbf{Facebook}}                                                                               \\
\cline{3-7}
\multicolumn{2}{c|}{}                                                      & {\cellcolor{gray!10}}\textit{~~~Likes~~} & {\cellcolor{gray!10}}\textit{Comments} & {\cellcolor{gray!10}}\textit{~~~Likes~~} & {\cellcolor{gray!10}}\textit{Comments} & {\cellcolor{gray!10}}\textit{~~~Tags~~~} \\
\hline
\multirow{4}{*}{\rotatebox[origin=c]{90}{\begin{tabular}[c]{@{}c@{}}~\textbf{Single}\\\textbf{hop}\end{tabular}}} & 
                                                   \begin{tabular}[c]{@{}c@{}}Rec. intensity\end{tabular}  & \xmark & & & & \\
\cdashline{2-7}[1pt/1pt]
                                                & \begin{tabular}[c]{@{}c@{}}Sen. intensity\end{tabular}   & & & & & \\
\cdashline{2-7}[1pt/1pt]
                                                & \begin{tabular}[c]{@{}c@{}}In degree\end{tabular}        & \xmark & \xmark & & \xmark & \\
\cdashline{2-7}[1pt/1pt]
                                                & \begin{tabular}[c]{@{}c@{}}Out degree\end{tabular}       &  \xmark & \xmark & & & \xmark \\
\hline
\multirow{2}{*}{\rotatebox[origin=c]{90}{\begin{tabular}[c]{@{}c@{}}~\textbf{Multi}\\\textbf{hop}\end{tabular}}}  & 
                                                   \begin{tabular}[c]{@{}c@{}}HI-index\end{tabular}        & & & \xmark & \xmark & \\
\cdashline{2-7}[1pt/1pt]
                                                & \begin{tabular}[c]{@{}c@{}}PageRank\end{tabular}         & \xmark & & & & \xmark\\
\hline
\end{tabular}
}
\vspace{.2cm}

\end{table}

\section{\seeding}\label{sec:seeding}
In this section, we address the research question of how to optimize the information spread via a seeding strategy to the targeted demographic group, i.e., how to meet a target gender ratio.\footnote{We use the target gender ratio and the target female ratio interchangeably here.} The solution can be applied to commercial and governmental campaigns that aim to optimally reach a certain percentage of females~\cite{True:ISQ:2002:networkdiffusion} or other groups in society. Concretely, a marketing campaign may aim to select a fixed number of influencers to try out a product before its formal release such that the corresponding product information will be maximally spread at a ratio of 30\% females. This is especially challenging when the targeted gender ratio deviates from the original ratio in the population. Formally, the influence maximization with the target gender ratio constraint is formulated as follows.

\begin{definition}[Disparity Influence Maximization (DIM)]
Given a social network $G=(V,E)$, a diffusion model, a seed group size $K$, a target gender ratio $\zeta$, and an error margin $e$, the problem is to select a seed group $S \subseteq V$ with $\left| S \right|=K$ to maximize the influence spread under the constraint that the female ratio in the influenced users is $\zeta$ within an error margin $e$.
\end{definition}

\begin{thm}\label{thm:np}
It is NP-hard to approximate the Disparity Influence Maximization problem (DIM) within any factor unless $P=NP$.
\end{thm}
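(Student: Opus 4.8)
The plan is to show that deciding \emph{feasibility} of DIM --- whether any seed set of size $K$ induces an influenced set whose female ratio lies in $[\zeta-e,\zeta+e]$ --- is already NP-hard, and then to observe that hardness of feasibility immediately rules out approximation within any factor. The guiding principle is that a $\rho$-approximation algorithm, whatever the value of $\rho$, must by definition return a \emph{feasible} solution on every instance that admits one; hence it could be used to detect feasibility. Because every feasible seed set in our reduction induces strictly positive spread, the value of $\rho$ never enters the argument, which is exactly what ``any factor'' demands.

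First I would fix the diffusion model to a deterministic instantiation by setting all Independent Cascade activation probabilities to $1$, so that the influenced set equals the set of nodes reachable from the seeds and is computable in polynomial time. This sidesteps the \#P-hardness of computing expected spread under IC/LT and makes feasibility of a candidate output poly-time checkable. The hardness then holds a fortiori for the general stochastic diffusion models, since these deterministic instances are a special case.

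Next I would reduce from \emph{Exact-$k$ Subset Sum} (given positive integers $w_1,\dots,w_n$ and targets $k,T$, decide whether some exactly-$k$ of them sum to $T$), which is NP-hard. Build a digraph with one selectable seed $s_i$ per element, all seeds male; a private female cluster $F_i$ of $w_i$ nodes with edges $s_i\to F_i$; and a single male ``base'' node $b$ reachable from every $s_i$. Set the budget $K=k$. Selecting a size-$k$ seed set $S$ then influences exactly the $k$ seeds, the base $b$, and $\bigcup_{i\in S}F_i$, yielding $k+1$ males and $m:=\sum_{i\in S}w_i$ females, so the female ratio is $m/(k+1+m)$. Choosing $\zeta=T/(k+1+T)$ makes this ratio equal $\zeta$ iff $m=T$; and since the attainable ratios $g(m)=m/(k+1+m)$ are separated by at least $(k+1)/(k+1+W)^2$ with $W=\sum_i w_i$, picking any $e<\tfrac{1}{2}(k+1)/(k+1+W)^2$ (a polynomially representable rational) forces $m=T$. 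Hence a feasible seed set exists iff the Exact-$k$ Subset Sum instance is a ``yes''.

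Finally I would close the loop: on a ``yes'' instance every feasible $S$ has spread $k+1+T>0$, while on a ``no'' instance no feasible $S$ exists. Given a purported $\rho$-approximation $A$ for any factor $\rho$, run $A$ on the constructed instance and test --- in polynomial time, via reachability --- whether its output is feasible. On ``yes'' instances $A$ must return a feasible (indeed optimal) set; on ``no'' instances its output cannot be feasible because none exists. Thus $A$ decides Exact-$k$ Subset Sum, forcing $P=NP$. The main obstacle is the reduction's quantitative core: engineering the ratio equation and the error margin $e$ so that \emph{only} exact subset-sum solutions are feasible (no spurious ratio collisions), while keeping $\zeta$ and $e$ of polynomial bit-length. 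The inapproximability itself then follows for free from the NP-hardness of feasibility.
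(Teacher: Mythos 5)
Your proof skeleton is the right one, and indeed the standard one for ``inapproximable within any factor'' results: show that merely deciding \emph{feasibility} of DIM (does any size-$K$ seed set meet the ratio constraint?) is NP-hard, then observe that a $\rho$-approximation algorithm, for any $\rho$, must return a feasible solution whenever one exists, so it would decide feasibility. Specializing to Independent Cascade with all activation probabilities equal to $1$ is also legitimate: it is a special case of the model class, it makes the influenced set deterministic, and it makes feasibility of a returned seed set checkable in polynomial time. (The paper itself defers its proof to an online appendix, so I am judging your argument on its own merits.) The flaw is in the reduction.

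The gap: DIM allows the seed set to be \emph{any} $S\subseteq V$ with $|S|=K$, but both your construction and your error-margin analysis silently assume that seeds are chosen only among the designated nodes $s_1,\dots,s_n$; that assumption is what fixes the number of influenced males at $k+1$ and lets you treat the attainable ratios as the one-parameter family $g(m)=m/(k+1+m)$ with spacing at least $(k+1)/(k+1+W)^2$. An algorithm may instead seed female cluster nodes or the base node $b$, in which case the attainable (male, female) counts are $\bigl(a+1,\ \sum_{i\in A}w_i+c'\bigr)$ for any $A$ with $|A|=a\le k$ and any $0\le c'\le k-a$, and these create spurious feasible sets on ``no'' instances. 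Concrete counterexample: $k=2$, $(w_1,w_2,w_3)=(3,4,100)$, $T=6$. No two elements sum to $6$, so this is a ``no'' instance of Exact-$k$ Subset Sum, and $\zeta=6/9=2/3$. Yet seeding $s_1$ together with one female node outside $F_1$ influences two males ($s_1$ and $b$) and four females ($F_1$ plus the seeded node), a ratio of exactly $2/3=\zeta$; seeding $s_2$ plus any redundant second seed does the same. So your DIM instance is feasible although the subset-sum instance is not, and the reduction does not establish NP-hardness of feasibility. In general, whenever $(k+1)\mid T(a+1)$, a size-$a$ subset summing to $T(a+1)/(k+1)-c'$ plus $c'$ singleton female seeds hits $\zeta$ exactly, and near-misses with denominator $a+1+f$, $a<k$, are not excluded by your separation bound either, since that bound was derived with the male count frozen at $k+1$. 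To repair the proof you must either redo the separation analysis over \emph{all} attainable (male, female) pairs --- e.g., rescale the weights and impose number-theoretic structure so that no pair with $a<k$ can land in the margin --- or add gadgetry forcing every feasible seed set to consist of designated nodes up to redundant seeds; as written, the quantitative core you yourself flagged as ``the main obstacle'' is exactly where the argument fails.
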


\begin{proof}
Please refer to ~\cite{online} for the proof.
\end{proof}

\subsection{\seeding Framework}
We propose the \textit{Disparity Seeding} framework, which selects influential females and males according to a given seed group size and target gender ratio that can be far off from the population ratio. Our disparity framework is composed of two phases, illustrated in Figure~\ref{fig:flowchart}: ranking users and deciding seeding ratio. First, we identify influential users by two gender-aware metrics, i.e., \emph{Target HI-index} and \emph{Embedding index}, to select the most influential node achieving the target gender ratio. Then, we estimate at what proportion to allocate the available seeds to males and females, abbreviated as the seeding ratio, based on their ranks. Our earlier analysis indicates there is a clear gap between the two. For instance, there is a higher percentage of very influential males even though there is a lower percentage of males in the population.\footnote{Note that our Disparity Seeding framework is a general method which can also support other sensitive attributes, e.g., race or age.}

\begin{figure}[t]
\centering
\begin{subfigure}{1\linewidth}
    \centering
    \includegraphics[width=\columnwidth]{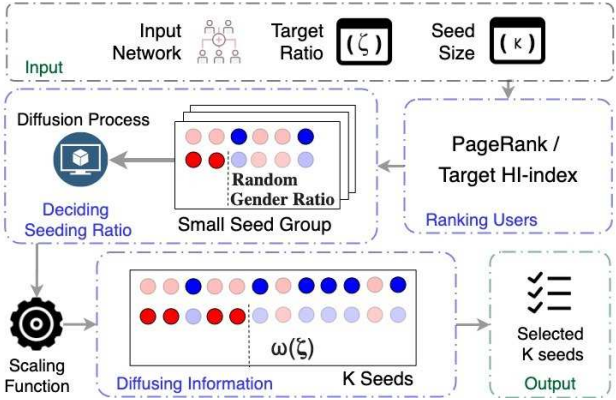}
\end{subfigure}
\vspace{-.3cm}
    \caption{The framework illustration of Disparity Seeding.}
    \label{fig:flowchart}

\end{figure}

\subsubsection{Ranking Users}  \label{sec:target_hi_index}

To rank users, we introduce two gender-aware ranking measures, \emph{Target HI-index} and \emph{Embedding index}.

First, Target HI-index, based on HI-index, favors the users having direct neighbors with a gender ratio close to the given target. In other words, the users who interact with direct neighbors with a dissimilar female ratio to the target ratio $\zeta$ are penalized. 
\begin{definition}
\emph{Target HI-index} of a user $v_i$ is defined as $v_i$'s HI-index but penalized by the difference between $\zeta$ and the female ratio of $v_i$'s neighbors from which the HI-index is derived. 
Let $N^\text{F}(v_i,n)$ denote the number of $v_i$'s direct female neighbors interacting with others at least $n$ times. The Target HI-index of $v_i$ is formulated as
\begin{align}\small
TH(v_i,\zeta)=H(v_i)
\cdot \Big(1-  |\frac{N^\text{F}\big(v_i,H(v_i)\big)}{N\big(v_i,H(v_i)\big)} - \zeta |\Big),
\end{align}
is the penalty for the female ratio of $v_i$'s direct neighbors having at least $H(v_i)$ interactions with others not satisfying $\zeta$. A larger difference between $\zeta$ and the female ratio in $N\big(v_i,H(v_i)\big)$ results in a greater penalty on $H(v_i)$.
\end{definition}

On the other hand, as Graph Neural Networks (GNNs) have been celebrating successes in various domains by modeling the long-term dependencies between nodes in a graph, we also introduce the \emph{Embedding index} with a novel GNN model, namely \emph{Influence Graph Neural Network (InfGNN)} model. In contrast to previous ranking methods, which rely on oversimplified assumptions, remain highly sensitive to their hyperparameters~\cite{aral2012identifying}, and thus can provide inappropriate estimations compared to actual cascades~\cite{pei2018theories}, InfGNN can estimate the influence dynamically and accommodate rich node attributes to support diverse goal, e.g., target gender ratio.\footnote{Note that the Embedding index is only adopted on Facebook since this dataset contains user profiles which are required for the initialization of GNN \cite{kipf2016semi}.} In the following, we first present the definition of the \emph{Embedding index} and then describe the design of the InfGNN model. 
\begin{definition}
  \emph{Embedding index} of a user is learned from the social network by preserving the proximity between users by InfGNN, which also penalizes the difference between $\zeta$ and the female ratio. 
\end{definition}
To effectively estimate the Embedding index, InfGNN consists of two primary components: 1) stacked GNN layers to derive the node embeddings and 2) an influence predictor to estimate the influence of the learned embeddings. Given the $l^{th}$ hidden feature $\mathbf{H}^l$ of nodes in the graph, GNNs update the  $(l+1)^{th}$ latent features of node $v_i \in V$, denoted by $\mathbf{h}^{l+1}_i\in\mathbf{H}^{l+1}$, by aggregating the features of $v_i$'s neighboring nodes $N(v_i)$, which can be written as follows. 
\begin{equation}\small\label{eq:agg}
    \mathbf{h}^{l+1}_{i}= ReLU (Attn^l (\mathbf{h}_{i}^{l},\mathbf{h}_j^{l},\forall v_{j} \in N(v_i))),
\end{equation}
where $Attn^l(\cdot)$ is the attention mechanism~\cite{velivckovic2017graph} for aggregation, and $ReLU(\cdot)$ is a non-linear activation function~\cite{kipf2016semi}.  After stacking the  $L^{th}$ layer, InfGNNs adopts the final hidden layer $\mathbf{H}^L$ as the embedding $\mathbf{h}_i$ of each node $v_i$. Then, we adopt skip-gram objective~\cite{mikolov2013distributed} as our proximity loss to structural information on graph $G$, i.e., 
\begin{equation}\small
   L_{prox} = -\sum_{v_i \in V} (\sum_{v_j \in N(v_i)} \sigma(\mathbf{h}^{T}_i  \mathbf{h}_j) + \mathbb{E}_{v_j \sim P_{N}(\mathbf{v}_i)} [\sigma (-\mathbf{h}^{T}_i  \mathbf{h}j)]),
\end{equation}
where $\sigma$ is the sigmoid function and $P_{N}(\cdot)$ is the distribution for negative sampling of users. After deriving the node embedding, the next step is to predict the influence score $s_i \in \mathbf{s}$ of each node $v_i$ by
\begin{equation}\small
    s_i = \sigma (\mathbf{c}^T\mathbf{h}_{i}),
\end{equation} 
where $\mathbf{c}$ is a trainable vector to calculate the importance of each node. In contrast to traditional GNNs, which usually require a great effort in collecting labels (e.g., the importance of each node)~\cite{kipf2016semi,velivckovic2017graph}, we introduce influence loss $L_{inf}$  to train InfGNN in a self-supervised manner,
\begin{equation}\small\label{eq:linf}
   L_{inf} =\sum_{v_i \in V} (\norm{s_i-s'_i}_2 - \mathbb{E}_{v_j \sim P_{N}(\mathbf{v}_i)} (\norm{s_i-s_j}_2)).
\end{equation}
Note that the first part of the Eq. (\ref{eq:linf}) minimizes the error between the self and estimated scores, and the second part is used to distinguished the influence between each node by negative sampling. The estimated score $s'_i$ is derived from $v_i$'s neighborhoods, i.e.,
\begin{equation}\small
    s'_i = \sum_{v_j \in N(v_i)}\frac{\exp(\mathbf{a}^T[\mathbf{h}_i,\mathbf{h}_j]) s_j}{\sum_{v_k \in N(v_j)}\exp(\mathbf{a}^T[\mathbf{h}_j,\mathbf{h}_k])}.
\end{equation}
where $\mathbf{a}$ is a trainable vector to measure the influence between nodes. Unlike conventional GNN~\cite{velivckovic2017graph} only aggregates the $1$-hop neighborhood information and thereby suffer from oversmoothing~\cite{kipf2016semi}, i.e., every node has the same  embedding and thereby lead to a similar influence score, our InfGNN estimate the influence by considering the $2$-hop information. While previous statistical measurements~\cite{Alonso:JI:2009:Hindex} require to identify the influence by some fixed hyperparameters to exploit a specific property of the graph, InfGNN calculates the importance of each node dynamically and thereby is more general. Our overall objective becomes, 
\begin{equation}\small
   L = L_{prox} + \lambda_1  L_{inf}  + \lambda_2 \norm{\mathbf{s}}_1 + \lambda_3 \norm{\mathbf{h}}_2,
\end{equation} 
Note that $\norm{\mathbf{s}}_1$ is the $l_1$ regularization, which discretizes the output distribution to force the model to concentrate the influence on a few nodes and $\norm{\mathbf{h}}_2$ is the $l_2$ regularization of node embeddings. $\lambda_1$, $\lambda_2$ and $\lambda_3$ are the hyperparameters to determine the trade-off between the proximity and the influence score. We adopt SGD~\cite{bottou2012stochastic} to minimize the above loss function and obtain the embedding and influx score of each node. Finally, our Embedding index (EI) of the node $v_i$ is defined as follows.
\begin{equation}\small
   EI(v_i) = s_i \cdot D_{in}(v_i) \cdot \Big(1-|\frac{N^\text{F}\big(v_i,n)\big)}{N\big(v_i,n)\big)} - \zeta |\Big),
\end{equation} 
where $D_{in}(v_i)$ is the indegree of node $v_i$. Intuitively, the influence score $s_i$ represent the spreading intention of a user, while the indegree $D_{in}(v_i)$ is the number of connections. Note that  $\Big(1-|\frac{N^\text{F}\big(v_i,n)\big)}{N\big(v_i,n)\big)} - \zeta |\Big)$ penalizes to $v_i$ if it violates the target gender ratio.\footnote{The threshold interacting times $n$ of Embedding index is a hyperparameter which is set to $3$, empirically.}

\begin{figure}[t]
\centering
\begin{subfigure}{.49\linewidth}
    \centering
    \includegraphics[width=.9\columnwidth]{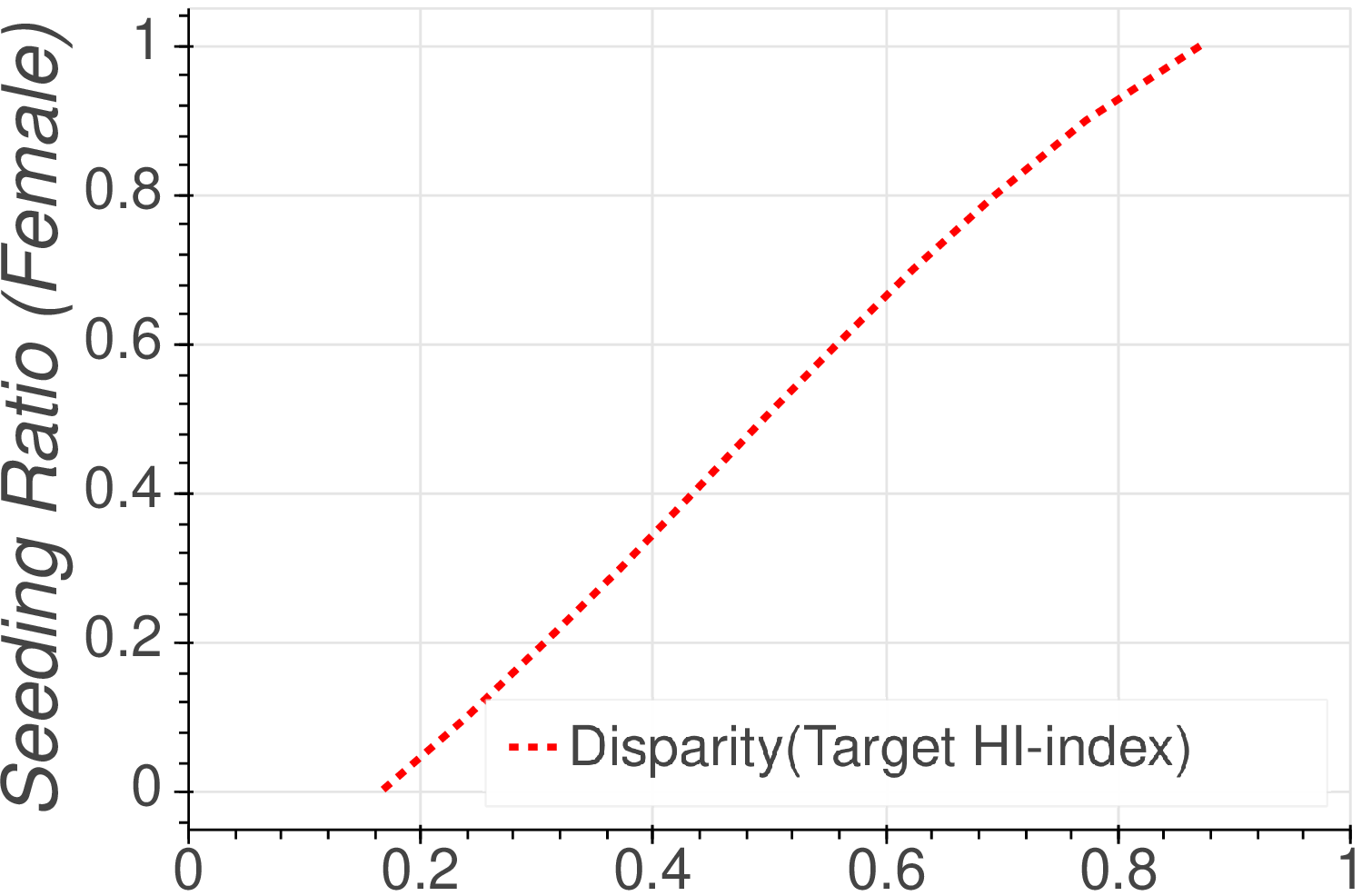}
    \caption{Target ratio (Likes)}
    \label{fig:seeding2target_like}
\end{subfigure}
    \hfill
\begin{subfigure}{.49\linewidth}
    \centering
    \includegraphics[width=.9\columnwidth]{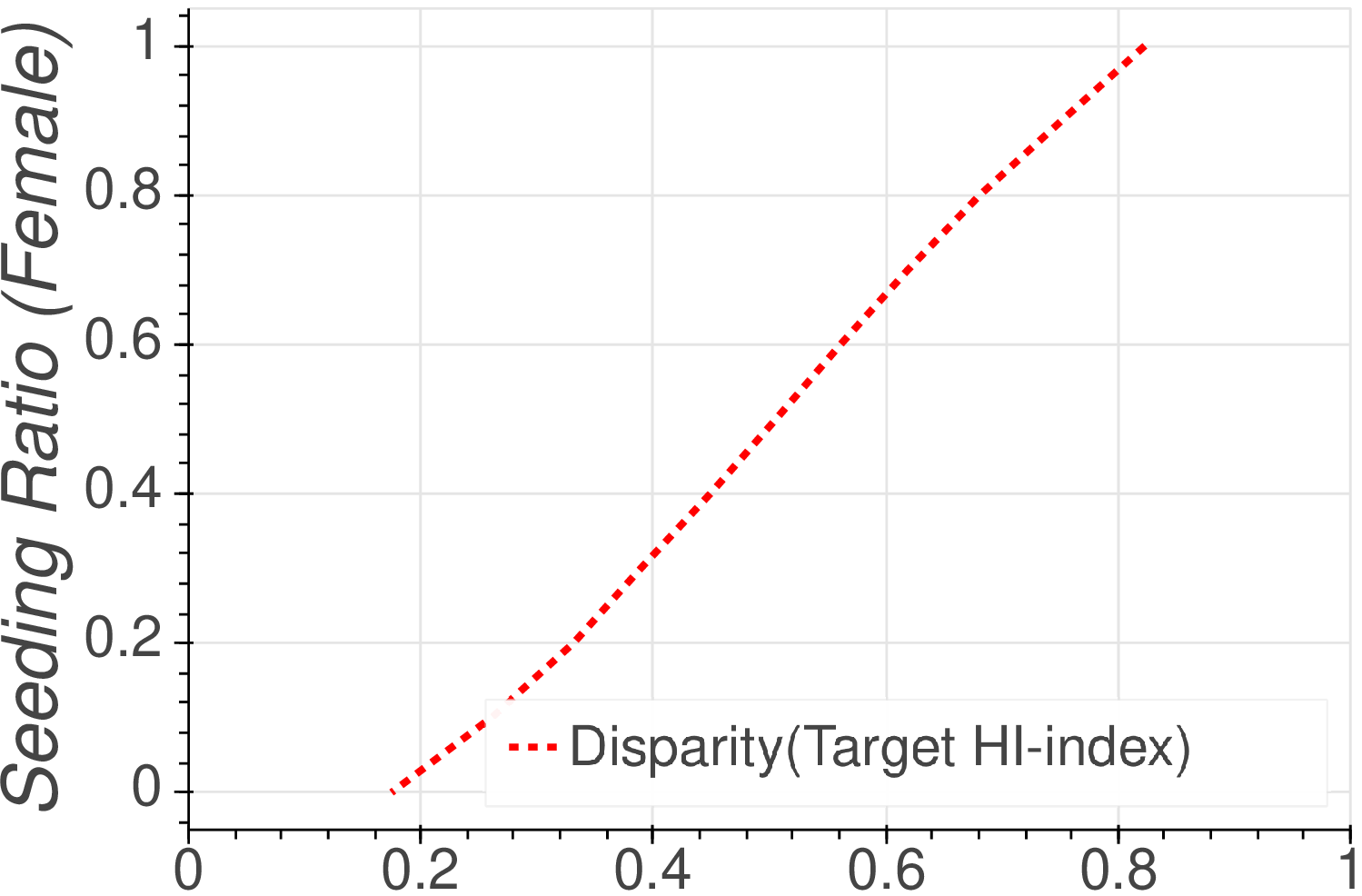}
    \caption{Target ratio (Comments)}
    \label{fig:seeding2target_comment}
\end{subfigure}
\vspace{-.3cm}
\caption{The learned scaling function $\omega$.}
\label{fig:seeding2target}

\end{figure}

\begin{figure*}[t]
    \centering
    \includegraphics[scale=0.5]{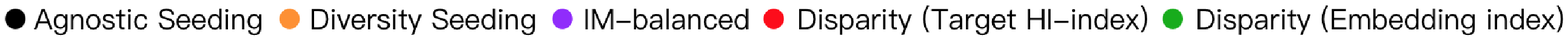}
\vspace{-3mm}
\end{figure*}
\begin{figure*}[t]
 \centering
 \begin{subfigure}{.19\linewidth}
     \centering
     \includegraphics[width=\columnwidth]{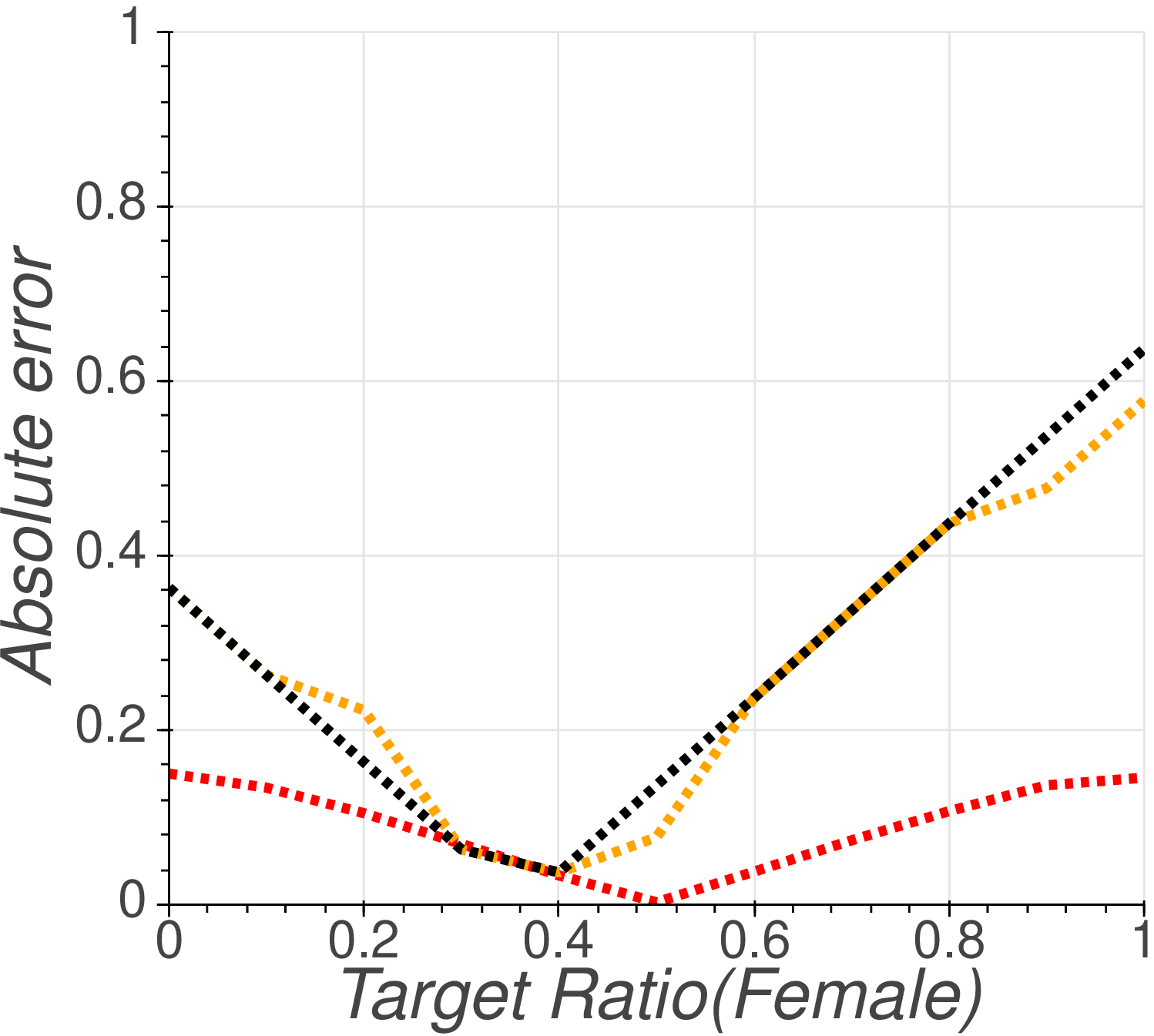}
     \caption{Instagram (Comments)}
     \label{fig:parentofront_comment}
      \end{subfigure}
 \hfill
 \begin{subfigure}{.19\linewidth}
     \centering
     \includegraphics[width=\columnwidth]{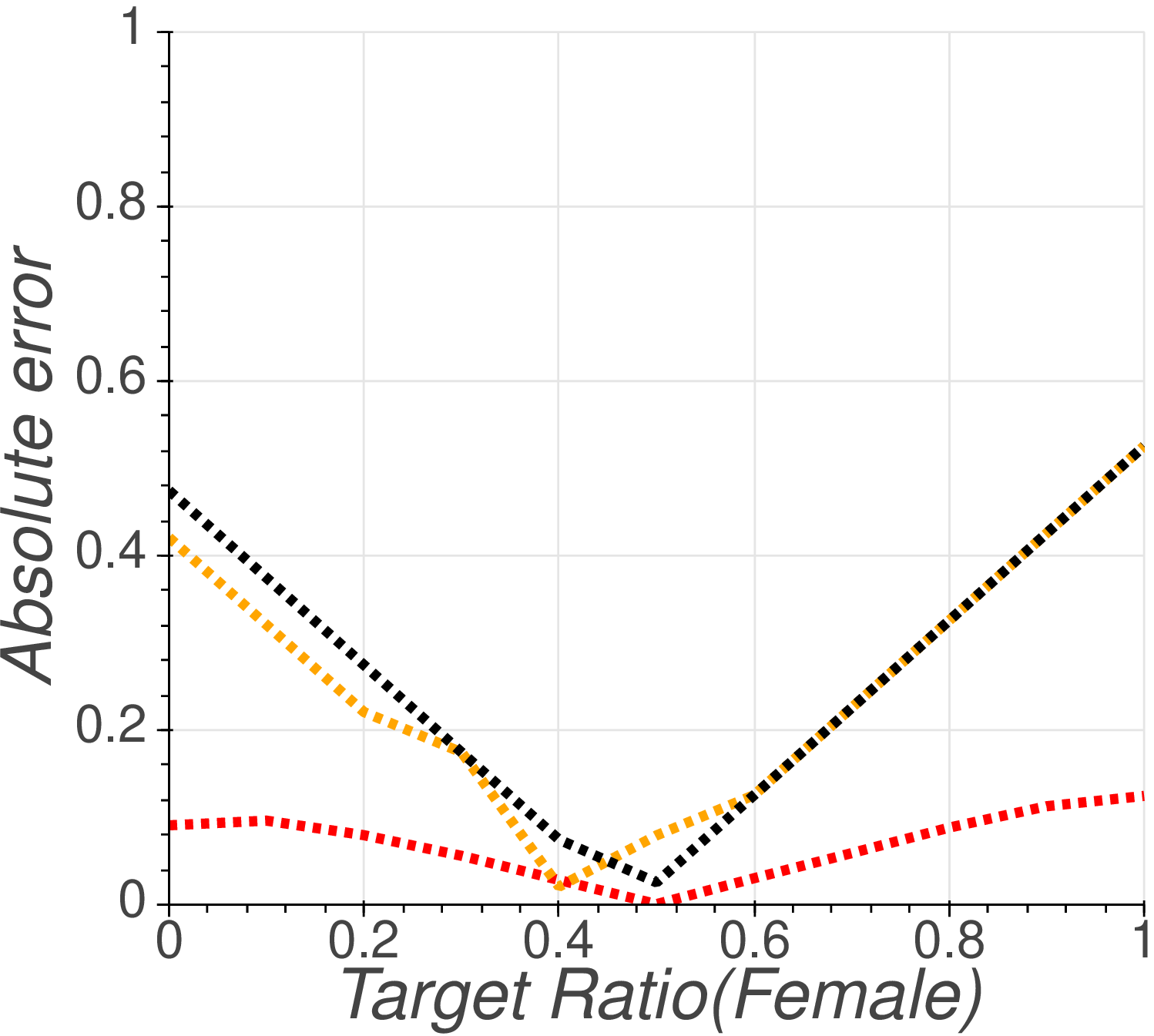}
     \caption{Instagram (Likes)}
     \label{fig:parentofront_like}
 \end{subfigure}
  \begin{subfigure}{.19\linewidth}
     \centering
     \includegraphics[width=\columnwidth]{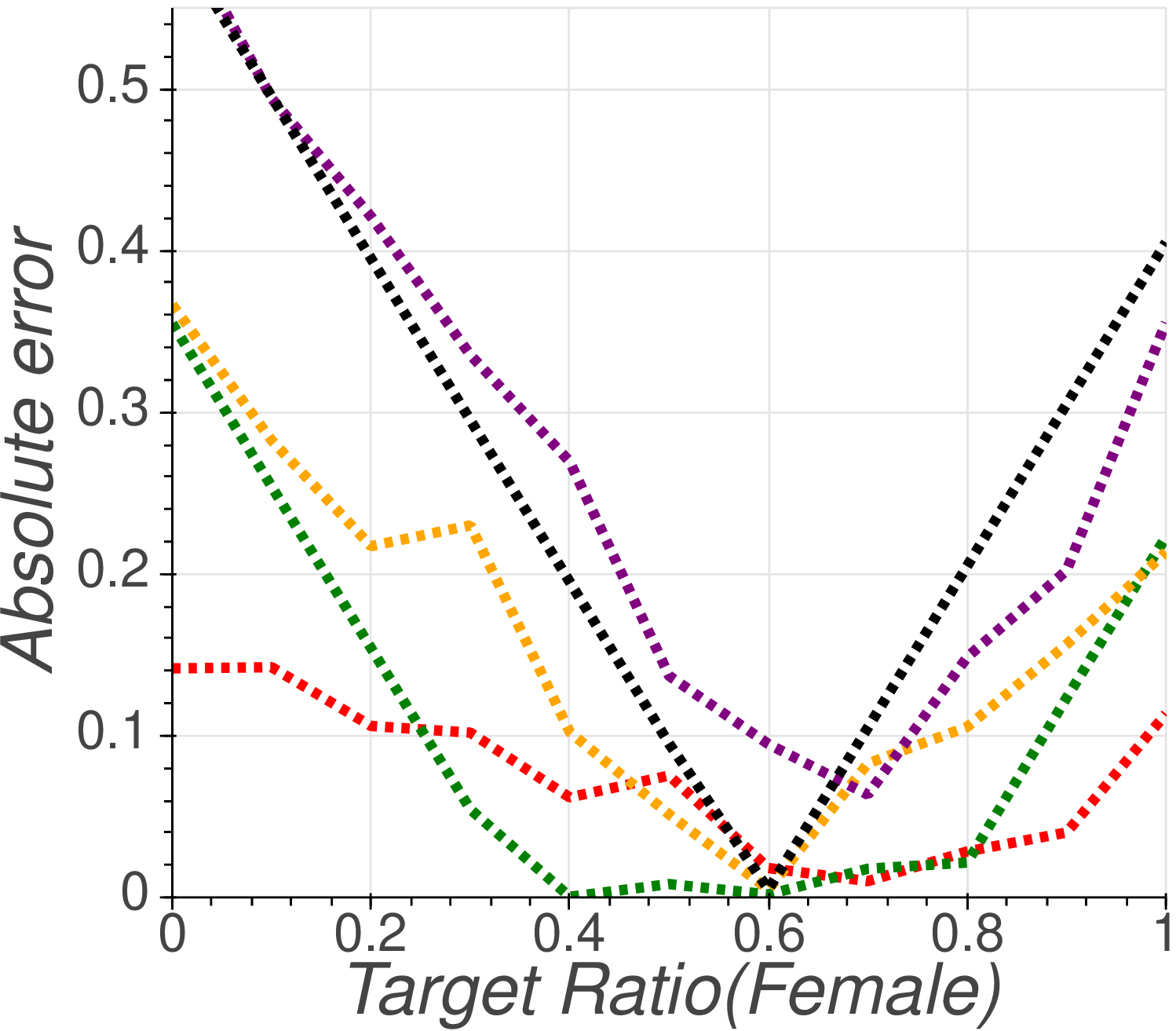}
     \caption{Facebook (Comments)}
     \label{fig:fb_parentofront_comment}
      \end{subfigure}
 \hfill 
 \begin{subfigure}{.19\linewidth}
     \centering
     \includegraphics[width=\columnwidth]{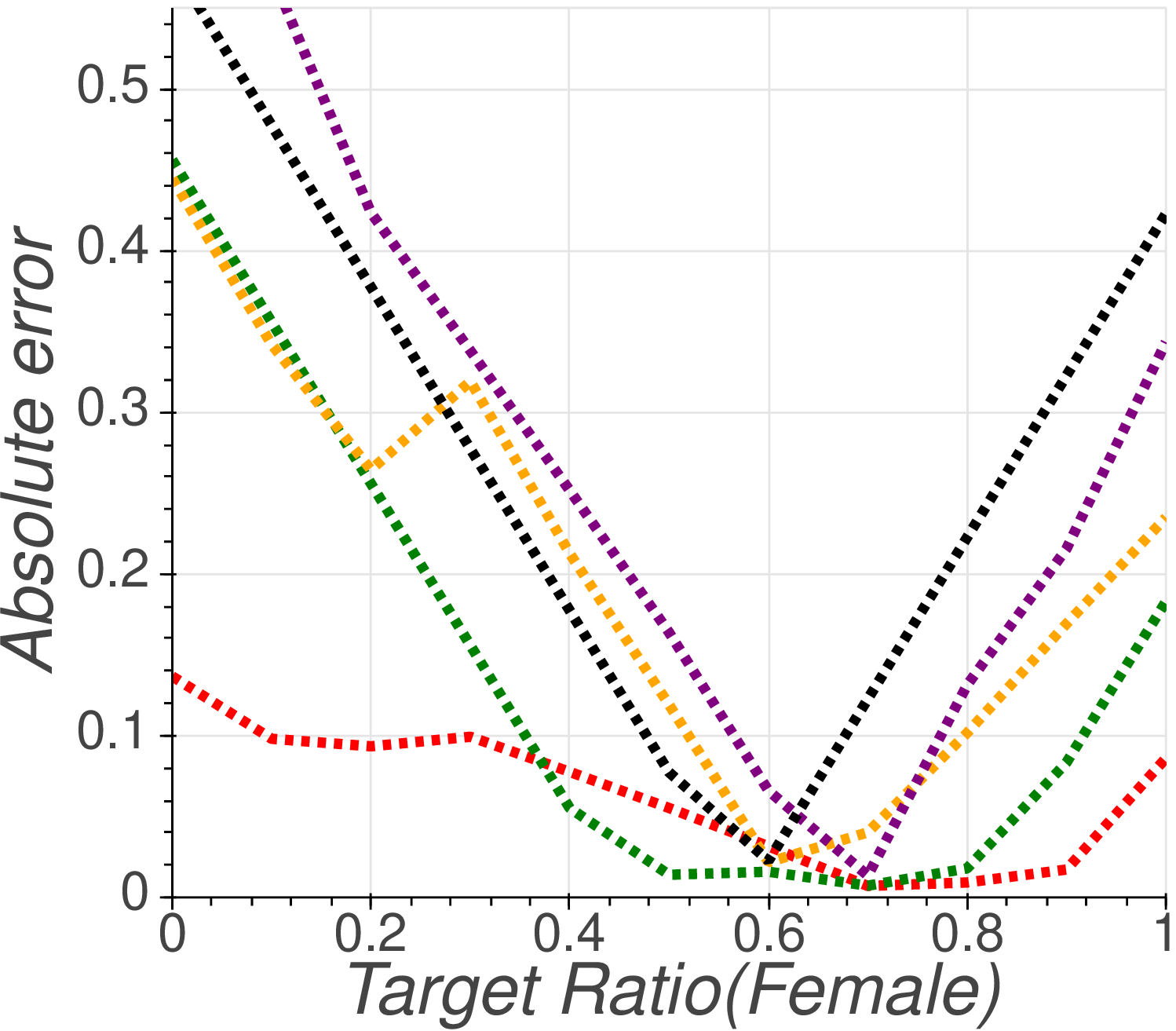}
     \caption{Facebook (Likes) }
     \label{fig:fb_parentofront_like}
 \end{subfigure}
 \hfill
 \begin{subfigure}{.19\linewidth}
     \centering
     \includegraphics[width=\columnwidth]{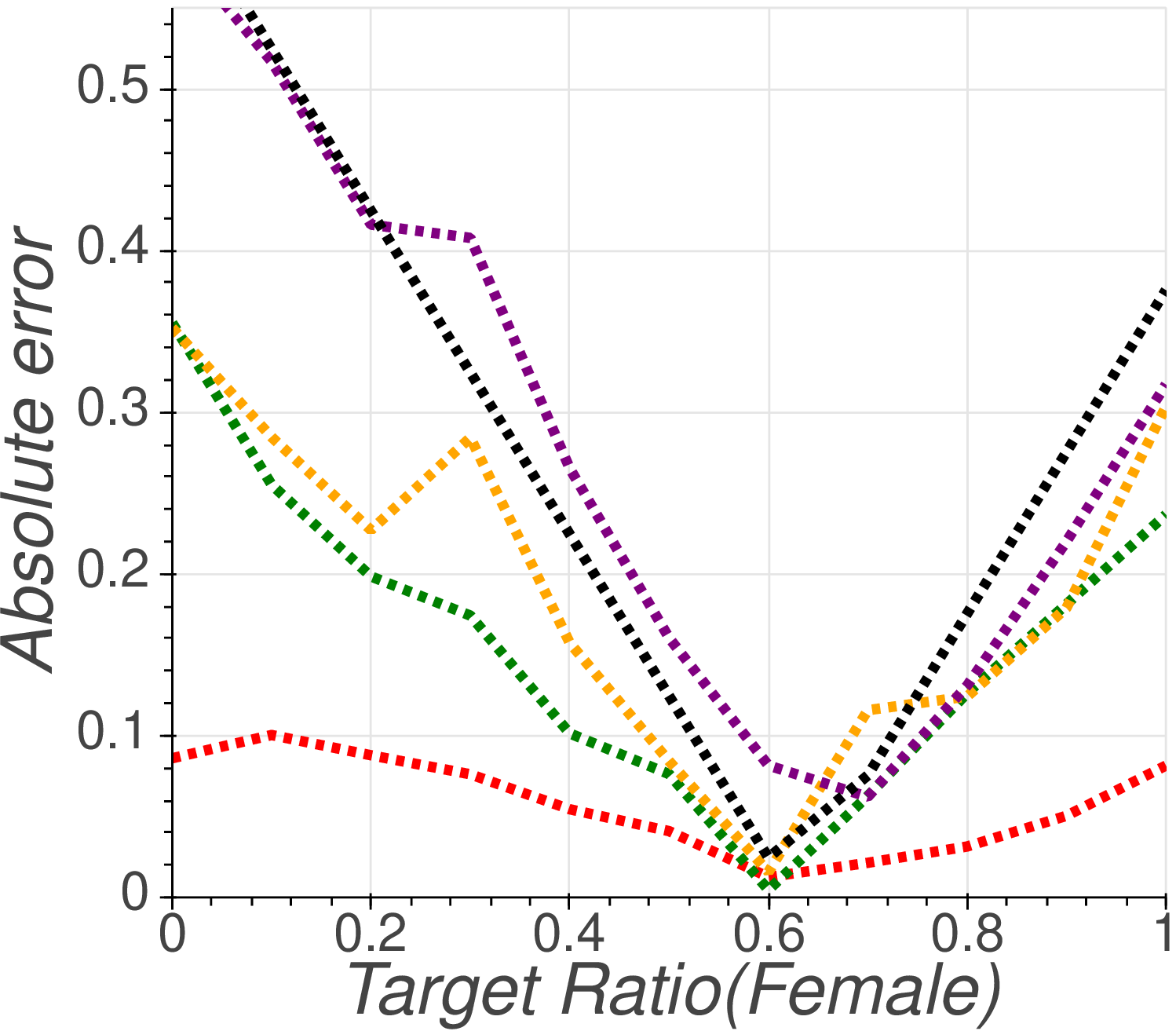}
     \caption{Facebook (Tags)}
     \label{fig:fb_parentofront_tag}
 \end{subfigure}
 
 \hfill
 \vspace{-.3cm}
 \caption{Evaluation on Instagram and Facebook.}
 \label{fig:roi_r1u_FB}
 \vspace{-.2cm}
 \end{figure*}

\subsubsection{Deciding Seeding Ratio} \label{sec:omega}
Here, we search for a scaling function $\omega: \zeta \rightarrow r$ that can map the target ratio $\zeta$ into the actual seeding ratio $r \in [0, 1]$ based on the ranking results. We use a simulated diffusion process\footnote{To obtain the dependency properly, the information propagation is simulated according to the specified diffusion process in the problem.} to capture the dependency between the target and seeding ratio and thus learn $\omega$. 
Specifically, a number of seed groups with different gender ratios are selected for information diffusion. 
Larger seed groups can better capture the dependency between the target and seeding ratio, whereas smaller groups shorten the simulation and learning time. The optimal choice of the size of such seed groups exceeds the current scope of the paper. The scaling function $\omega$ is determined by the seeding ratio and the female ratio of users adopting the information in each simulated result.\footnote{Essentially $\omega$ needs to be empirically learned for each ranking algorithm separately.} We explain how to derive $\omega$ from the simulation data as follows.

Figures~\ref{fig:seeding2target_like} and~\ref{fig:seeding2target_comment} illustrate the $\omega$ function of Target HI-index for liking and commenting networks on Instagram, respectively. Each point represents how to choose a seeding ratio (y-axis) given a target ratio (x-axis). The point is iteratively determined through simulations. The $\omega$ function of Target HI-index centrality has a wider range from $0.16$ to $0.86$ of target ratios (in the x-axis), showing higher flexibility in accommodating different disparity ratios. This observation holds for both likes and comments.

Finally, we select a total of $K$ seeds based on the learned $\omega$ function and the target ratio. They are essentially the top $\omega(\zeta)\cdot K$ females and the top $(1-\omega(\zeta))\cdot K$ males in their gender group. We conduct a final run of simulation using such seeds for both centrality measures for ranking, namely Target HI-index and Embedding index, and compare the information spread and resulting in each gender ratio. We return the seeds selected by the centrality measure to maximize the overall information spread and achieve the target gender ratio within a certain error margin.

\subsection{Evaluation}

\subsubsection{Setup}

The demo system of \seeding is online.\footnote{The demo system of \seeding: \url{http://bit.ly/DisparitySeeding}.}
We compare two variants of \seeding (i.e., Target HI-Index and Embedding index) with the state-of-the-art approaches: diversity seeding ~\cite{Stoica:WWW:2020:SeedingDiver} and IM-balanced~\cite{gershtein2018balanced}.\footnote{The codes and datasets are available in \cite{online}.} Note that both diversity seeding and IM-balanced are not designed for the target gender ratio $\zeta$. To be aware of the target gender ratio, we implement diversity seeding and IM-balanced as follows. Diversity seeding selects a value between $\zeta$ and the gender ratio of the top $K$ users with the highest in-degree as the seeding ratio such that the influence is maximized. IM-balanced greedily selects seeds to maximize the influence while ensuring that the influence on females is at least $\zeta$ of the optimal influence on females. 
Following~\cite{Stoica:WWW:2020:SeedingDiver}, the diffusion is simulated through the Independent Cascade diffusion model~\cite{Kempe2003}, where the probability for user~$v_i$ to influence user~$v_j$ is set as the number of likes/comments/tags $v_i$ gives to $v_j$ over the total number of likes/comments/tags that received by user~$v_i$~\cite{Nguyen:SIGMOD:2016:stop}. 

We evaluate the performance with varying target gender ratios. Specifically, the performance metrics include 1) the absolute error, i.e., $\left| s -\zeta \right|$, where $s$ is the spreading gender ratio and 2) the influence spread.
The seed group size $K$ is $5000$ and $100$ for Instagram and Facebook, respectively; the error margin is $20\%$ for both datasets.
For \seeding, the size of sampled seeds is $1,000$ and $20$ for Instagram and Facebook, respectively. Each simulation result is averaged over $10,000$ samples. All experiments are run on an HP DL580 server with an Intel 2.40GHz CPU and 4GB RAM.

\subsubsection{Simulation Results on Instagram.}
Figures~\ref{fig:parentofront_comment} and~\ref{fig:parentofront_like} manifest the absolute errors (y-axis) under \seeding with Target HI-index and diversity seeding with a varying target gender ratio (x-axis).\footnote{\seeding with Embedding index is not adopted on Instagram since the Instagram dataset lacks the user profiles which are required for the initialization of GNN. IM-balanced is omitted since it employs the greedy approach and cannot report the results within $12$ hours due to the large scale of Instagram.} Note that an approach with a smaller absolute error is more capable of fulfilling the requirement of the target gender ratio. We have two observations. First, for all target gender ratios, \seeding has the smallest absolute errors because \seeding carefully decides the seeding ratio by capturing the dependency between the target and the seeding ratios. Besides, target HI-index factors the target ratio into ranking and penalizes users who do not fulfill the requirement, adding more flexibility in accommodating extreme target ratios. By contrast, even though diversity seeding takes the target ratio as its input, it only looks for a ratio to maximize the influence spread rather than minimizing the error between the spreading and the target ratios. 
Second, the absolute errors are the smallest for both approaches when the target ratio is between $0.5$ and $0.7$ because the female ratio on Instagram is $52.15\%$, making both methods easier to satisfy the target gender ratio around $50\%$.

\subsubsection{Simulation Results on Facebook.}
Figures~\ref{fig:fb_parentofront_comment}-\ref{fig:fb_parentofront_tag} show the absolute errors on Facebook under different target ratios. In addition to Target HI-index, since Facebook contains user profiles, we further evaluate Embedding index, which ranks users according to both network structure and user profiles. The proposed \seeding with either of the ranking measures can achieve the smallest absolute errors, whereas diversity seeding and IM-balanced are unable to achieve. Note that IM-balanced can only satisfy the target ratio between $0.5$ and $0.8$ since the female ratio on Facebook is $59.09\%$, showing that the guaranteed ratio of the optimal influence on females is not related to the target gender ratio. Comparing Target HI-index and the Embedding index, one can see that Target HI-index is better than the Embedding index, i.e., Target HI-index is located within the error margin of $20\%$ for all target gender ratios.

Table~\ref{T:inf} compares the influence spread on Instagram and Facebook when the target ratio $\zeta$ is satisfied within the error margin of 20\%.\footnote{\seeding with Embedding index is ``n/a'' on Instagram since this dataset lacks the user profiles for the initialization of GNN. IM-balanced is not included since it cannot report the results on Instagram within a reasonable time and its spread on Facebook does not satisfy the target ratio within the error margin of $20\%$.} For Facebook, \seeding with Embedding index outperforms the others since it carefully exploits user profiles on the Facebook dataset to identify influential users. For Instagram, \seeding with Target HI-index has the influence spread comparable to diversity seeding. Nevertheless, Target Hi-index has much smaller absolute errors than diversity seeding for all $\zeta$.

\begin{table}[t]
\small
\setlength{\tabcolsep}{0.4em}
    \caption{The influence spread on Instagram and Facebook.}
    \label{T:inf}
    \vspace{-.3cm}
    \centering
    \begin{tabular}{|c|ccc|cc|}
    \hline
    \multirow{2}{*}{} & \multicolumn{3}{c|}{Facebook} & \multicolumn{2}{c|}{Instagram} \\
     & Comment & Like & Tag & Comment & Like\\ \hline
    Target ratio $\zeta$  & 0.9 & 0.9 & 0.9 & 0.5 & 0.5\\ \hline
    Target HI-index & 222.2 & 202.5 & 241.2 & \textbf{8542.6} & 7508.7  \\
    Embedding index & \textbf{232.8} & \textbf{255.3} & \textbf{263.8} & n/a & n/a  \\
    Diversity seeding & 187.8 & 224.9 & 243.8 & 8516.14 & \textbf{7531.73}  \\
    Agnostic seeding & 175.5 & 182.1 & 229.9 & 8469.28 & 7523.47  \\
    \hline
    \end{tabular}
\end{table}

\section{Conclusion}
Leveraging Instagram and Facebook interaction data, we investigated usage patterns and the manifestation of the glass ceiling in different interaction types on social media. We discovered correlations between gender and both high visibility and popular endorsement by jointly considering degrees and interaction intensity of direct and indirect interactions with traditional and novel measures. Motivated by the gender disparity exhibited in online social networks, we proposed a \seeding framework that aims to maximize the information spread and reach a gender target ratio, which may differ from the original ratio in the population. The core step of \seeding applies the proposed centrality measures to rank influential users, namely Target HI-index and Embedding index, and selects a suitable seed set through a simulation-based learning approach. Our evaluation results show that \seeding can achieve the target gender ratio in an agile manner and increase the information spread better than state-of-the-art parity seeding algorithms. The general design of \seeding can be extended to include additional information. It can also be applied to select seed sets to take under-privileged minorities into account and enhance their influence. 
\section*{Acknowledgement}
This work has been partly funded by the Swiss National
Science Foundation NRP75 project 407540\_167266, by MOST through grants 110-2221-E-001-014-MY3 and 109-2221-E-001-017-MY2, and by the Institute for Information Industry under Contract 110-0922.
We thank to National Center for High-performance Computing (NCHC) of National Applied Research Laboratories (NARLabs) in Taiwan for providing computational and storage resources.

\clearpage
\bibliographystyle{ACM-Reference-Format}
\balance
\bibliography{sample-base}

\end{document}
\endinput